\newtheorem{theorem}{Theorem}
\newtheorem{lemma}{Lemma}
\newtheorem{definition}{Definition}
\newtheorem{observation}{Observation}
\newtheorem{corollary}{Corollary}
\newtheorem{claim}{Claim}
\title{On The Pursuit of EFX for Chores:\\ Non-Existence and Approximations}
\author[1,2]{
Vasilis Christoforidis
}
\author[1,3]{
Christodoulos Santorinaios
}
\affil[1]{Archimedes/Athena RC}
\affil[2]{Aristotle University of Thessaloniki}
\affil[3]{Athens University of Economics and Business}
\begin{document}
    \maketitle
    \begin{abstract}
        We study the problem of fairly allocating a set of chores to a group of agents. The existence of envy-free up to any item (EFX) allocations is a long-standing open question for both goods and chores. We resolve this question by providing a negative answer for the latter, presenting a simple construction that admits no EFX solutions for allocating six items to three agents equipped with superadditive cost functions, thus proving a separation result between goods and bads. In fact, we uncover a deeper insight, showing that the instance has unbounded approximation ratio. Moreover, we show that deciding whether an EFX allocation exists is NP-complete. On the positive side, we establish the existence of EFX allocations under general monotone cost functions when the number of items is at most $n + 2$.  We then shift our attention to additive cost functions. We employ a general framework in order to improve the approximation guarantees in the well-studied case of three additive agents, and provide several conditional approximation bounds that leverage ordinal information.
    \end{abstract}
    
    \section{Introduction}\label{sec:intro}

Fair Division has been widely studied in the past decade, yielding a series of results for various fairness notions. One of the most popular notions is \emph{envy-freeness} (EF), under which each agent (weakly) prefers her own bundle to any other agent's bundle. In the case of divisible items, an EF allocation is always guaranteed to exist, while for indivisible items this is not always the case; consider for instance the scenario where we have to allocate a single item among two agents. This fact has led to numerous relaxations of envy-freeness and approximations thereof. 

One such notion is that of \emph{envy-freeness up to one item} (EF1) \cite{Budish2011}. In EF1 allocations, an agent $i$ might envy agent $j$ but the envy is eliminated after hypothetically removing some item; either a good from agent $j$'s bundle or a chore from agent $i$'s bundle. EF1 allocations are known to exist for both goods \cite{LiptonMarkakisMosselSaberi} and chores \cite{BhaskarSVapprox,AzizCaragiannis}.

A stronger variant is that of \emph{envy-freeness up to any item} (EFX); an allocation is said to be EFX if envy vanishes after the removal of \emph{any} item. The existence of EFX allocations remains a challenging open problem in the area and has been even deemed as \say{fair division's most enigmatic question} \cite{ProcacciaEnigmatic}. EFX is known to exist for special cases: two agents with general valuations \cite{PlautRoughgarden}, and three additive agents \cite{ChaudhuryGargMehlhorn}\footnote{More recently, the result was further extended to capture more general valuations via a simplified analysis \cite{BergerCohenFeldmanFiat,AkramiEFXsimplified}.}. In contrast to the fruitful agenda on EFX for goods, the landscape of EFX allocations is less explored in the context of chores. For instance, even the existence for three additive agents as well as a constant factor approximation is elusive.

The problem of EFX allocations remains open, even for additive valuations, in the context of goods, \say{despite significant effort} \cite{CaragiannisUnreasonable}. Moreover, it has been suggested by \citeauthor{PlautRoughgarden} that EFX allocations may fail to exist:

\begin{quote}
    \emph{We suspect that at least for general valuations, there exist instances where no EFX allocation exists, and it may be easier to find a counterexample in that setting.}
\end{quote}

We verify the aforementioned suspicion, answering the analogous question of whether an EFX allocation always exists for chores to the negative. No such counterexample was known even in the setting of general monotone valuations, either for the goods or the chores setting. In fact, the only known counterexamples for the non-existence of EFX allocations employ \emph{mixed manna}, i.e., mixtures of goods and chores, therefore non-monotone valuations \cite{bérczi2020envyfree,HosseiniMixturesLex}. Apart from being the first counterexample for EFX over general monotone functions, our chore construction signifies the first separation from its goods-only counterpart.
    \subsection{Our Contributions}
We study fair allocations in a setting where $m$ indivisible chores need to be allocated to $n$ agents in a fair manner. We focus on a well-studied notion of fairness, (approximate) envy-freeness up to any item. 

We consider the following as our main technical results:

\begin{itemize}
    \item An EFX allocation for chores need not exist under general cost functions. We present a construction with three agents in which no bounded approximation exists (\Cref{th:non-ex}). 
    \item Determining whether an instance with three agents and superadditive costs admits an EFX allocation is NP-complete (\Cref{thm:npc}).
\end{itemize}

 We note that no such counterexample was previously known for any subset of general monotone valuations, either in the context of goods-only or chores-only. Notably, this is the first separation result between goods and chores regarding EFX, since EFX allocations are known to exist when the number of goods is at most $n+3$ \cite{Mahara21esa}. Lastly, we extend our non-existence and hardness results to many agents (\Cref{th:nagents}).\\

In light of these negative results, we focus on a setting with few items, namely $m \leq n + 2$; we prove the existence of EFX allocations under general monotone cost functions (\Cref{th:n+2}). Due to the aforementioned negative example, this is the largest constant $c$ for which all instances with three agents and $m \leq n + c$ items admit an EFX allocation. This is the first nontrivial result for a small number of chores under general cost functions; similar results have been established for goods \cite{AmanatidisMarkakisNtokos,Mahara21esa}, as well as for chores, albeit under additive utilities \cite{KobayashiMS23}.

Next, we focus on additive cost functions and adapt a general framework in order to obtain approximation guarantees for chores (\Cref{theorem:frame}), establishing a series of improved (conditional) approximation ratios under ordinal-based assumptions. We follow recent works due to \cite{BhaskarSVapprox,LiLiWuChores} that employ a variant of the well-known \emph{Envy Cycle Elimination} technique, namely \emph{Top Trading Envy Cycle Elimination}, to obtain improved approximation guarantees. Finally, we switch to the special case of three agents equipped with additive valuations. We improve the approximation ratio from $2+\sqrt{6}$ to 2 (\Cref{th:3agents}).
    \subsection{Related Work}
In this section we discuss prior works regarding EFX for goods and chores. We focus on the latter case. The growing literature on fair division is too extensive to cover here, and thus, we point the interested reader to the survey of \citeauthor{AmanatidisSurvey} for an extensive discussion on recent developments, along with further notable fairness notions and open problems. 

The seminal work of \cite{CaragiannisUnreasonable} showed that maximizing the Nash social welfare produces EF1 and Pareto optimal allocations for goods. The existence of EF1 and PO allocations remains a major open problem for chores, beyond a couple of restricted settings
\cite{Garg_Murhekar_Qin_2022,EbadianPSChores,BarmanNVSupermodular}.

\paragraph{Envy-freeness up to any item (EFX) for goods} Perhaps the most compelling relaxation of envy-freeness is EFX \cite{GourvesMonnotTlilane,CaragiannisUnreasonable}. In sharp contrast to EF1 that enjoys strong existential and algorithmic properties, EFX remains a challenging open problem. In the past years, numerous works have studied approximate versions while also establishing the existence of the notion in restricted settings. \cite{PlautRoughgarden} considered approximate EFX showing that $1/2$-EFX allocations always exist; \cite{ChanMMA} subsequently showed that such allocations can be computed in polynomial time  while \citeauthor{AmanatidisMarkakisNtokos} improved the approximation ratio to $\phi - 1$ for additive valuations, which is the best currently known factor. \citeauthor{ChaudhuryGargMehlhorn} showed in a breakthrough result that exact EFX allocations always exist for three agents with additive valuation functions. Regarding restricted settings, positive results are known for a small number of items, lexicographic preferences, two types of goods, two valuation types, and EFX in graphs \cite{Mahara21esa,HosseiniSVX21LexGoods,MaharaTypes,gorantla2023fair,ChristodoulouFiatKoutsoupiasSgouritsa}. Lastly, a major line of work has focused on binary valuations and generalizations thereof, including bi-valued instances and dichotomous valuations \cite{HalpernBinaryRule,AmanatidisEFXstories,Babaioff_Ezra_Feige_2021,BenabbouChakrabortyIgarashiZick}.
\paragraph{Envy-freeness up to any item (EFX) for chores}
In contrast to the case of goods, the existence of EFX allocations even for three agents with additive valuations remains an open problem. \cite{ZhouWuChores} obtained a $5$-approximation (later improved to $2+\sqrt{6}$ in the journal version) while also showing that an $O(n^2)$-EFX allocation always exists under additive cost functions for any number of agents. \cite{LiLiWuChores} showed the existence of EFX allocations when agents exhibit identical orderings over the set of items (commonly referred to as IDO instances), while \cite{GafniCopies} showed the existence of EFX allocations under additive leveled valuations. Similarly to the case of goods, several works have shown positive results for dichotomous valuations \cite{ZhouWuChores,KobayashiMS23,BarmanNVSupermodular,tao2023efx_poChores}. EFX allocations always exist under additive cost functions when $m \leq 2n$ \cite{KobayashiMS23} or when there are only two types of chores \cite{AzizTwoTypesChores}.

Lastly, we note that we heavily rely on an important subclass of valuations, namely superadditive cost functions; such functions capture complementarities among items and have received significant attention in the microeconomics and game theory literature \cite{AGTBook,HassidimKMN}. Prior work has also examined fair allocations in the presence of complements, both in the goods and the bads setting \cite{CaragiannisUnreasonable,BarmanNVSupermodular}.

\subsection{Paper Outline}

The remaining sections of the paper are outlined below. \Cref{sec:prelim} includes the formal model and relevant definitions. In \Cref{sec:non-existence} we derive our main technical results regarding existence, approximation, and hardness of EFX allocations. \Cref{sec:few-items} deals with the few items setting, while in \Cref{sec:addapx} we show improved approximations under additive cost functions. Finally, we conclude and propose two major open questions.

    \section{Preliminaries}\label{sec:prelim}

The problem of discrete fair division with chores is described by the tuple $\langle N, M, C \rangle$ where $N = \{1,\dots, n\}$ is the set of $n$ agents, $M$ is the set of $m$ indivisible chores and $C = \left(c_1(\cdot), \dots, c_n(\cdot) \right)$ is the agents' cost functions. 

\paragraph{Cost functions.} For each agent $i$, $c_i: 2^M \to \mathbb{R}_{\ge 0}$ is normalized, i.e. $c_i(\emptyset) = 0$, and monotone, $c_i(S \cup \{e\}) \ge c_i(S)$ for all $S \subseteq M$ and $e \in M$. A cost function $c$ is superadditive if for any $S, T \subseteq M: c(S \cup T) \geq c(S)+c(T)$ and additive if the previous relation holds always with equality. For ease of notation, we sometimes use $e$ instead of $\{e\}$. We use the terms valuations and cost functions for chores (or bads) interchangeably.

\paragraph{Allocation and bundles.} A subset of chores $X \subseteq M$ is called a bundle. An allocation $X = (X_1, \dots, X_n)$ is an $n$-partition of $M$, i.e., $X_i \cap X_j = \emptyset, \forall i,j \in N$ and $\bigcup_{i \in N} X_i = M$, in which agent $i$ receives the bundle $X_i$. We denote by $\sigma_i(j, S)$ the $j$-th most costly chore in $S$ under $c_i$, with ties broken arbitrarily. For instance, we may write $c_i(\sigma_i(1), M) \ge c_i(\sigma_i(2), M) \ge \dots \ge c_i(\sigma_i(m), M)$ to describe agent $i$'s preference over all chores in $M$.  When it is clear from context we will drop the set parameter for brevity.

\paragraph{Envy-freeness.} An allocation is said to be envy-free if there is no envy among agents. It is envy-free up to one item (EF1) if for any pair of agents $i, j \in N$ where $i$ envies $j$, there exists some item $e \in X_i$ such that $c_i(X_i \setminus e) \le c_i(X_j)$. An allocation $X = (X_1, \dots, X_n)$ is called EFX if for all pairs of agents $i,j \in N$ and any chore $e \in X_i$ it holds that $c_i(X_i \setminus e) \le c_i(X_j)$. Lastly, for approximate EFX allocations we have the following definition. 

\begin{definition}[$\alpha$-EFX]\label{def:efx}
    An allocation $X$ is $\alpha$-approximate envy free up to any chore ($\alpha$-EFX) if for any pair of agents $i,j$ and any $e \in X_i:\ c_i(X_i \setminus e) \le \alpha \cdot c_i(X_j)$. 
\end{definition}
We say that an agent $i$ strongly envies when \Cref{def:efx} is violated, i.e., $c_i(X_i \setminus e) > \alpha \cdot c_i(X_j)$ for some $e \in X_i$. By setting $a=1$ we retrieve the definition of an exact EFX allocation\footnote{We note that there exists an alternative definition in the literature, in which $\alpha \cdot c_i(X_i \setminus e) \le c_i(X_j)$ for any pair of agents $i,j$ and any $e \in X_i$. In this case, $\alpha$ lies within the same range as in the context of EFX approximations for goods, i.e., $0 < \alpha \leq 1.$}. 

\paragraph{Maximin share.} An extensively studied notion in fair division is maximin share fairness, introduced by \cite{Budish2011}. In the context of chores, the maximin share (MMS) corresponds to the minimum value an agent could guarantee to herself after partitioning the items into $n$ bundles and keeping the most burdensome thereof. MMS allocations need not exist even under additive cost functions, but strong approximation guarantees are known \cite{HuangLuMMS}.

\begin{definition} [Maximin share]
An allocation $X$ is said to be maximin share fair (MMS) if $$c_i(X_i) \leq \mu_i^n(M) = \min_{X \in \Pi_n(M)} \max_{k \in [n]} c_i(X_k), \quad \forall i \in N$$
\end{definition}

\subsection{Top Trading Envy Cycle Elimination}
In \Cref{sec:addapx}, en route to obtaining better approximation guarantees we will make use of the \textit{Top Trading Envy Cycle Elimination} algorithm (TTECE,  \Cref{alg:ttece}). Therefore, we think it is useful to include a short description. In contrast to the goods-only setting where the utilities of the agents are non-decreasing while performing envy-cycle eliminations, their cost increases when picking an item while decreases for the agents involved in a cycle elimination. The main insight of the algorithm is that an agent that does not envy any other agent in some allocation $X$, meaning that $X_i$ is her \textit{top} bundle, can receive an additional chore without violating the EF1 property. If such an agent always exists, then we can proceed in an incremental fashion, allocating one item at a time. Assuming that no such agent exists we can create the envy digraph of the allocation $G_X$ as follows: each node represents an agent and an edge from node  $i$ to node $j$ represents that agent $j$ owns $i$'s top bundle. Since every node has an outgoing edge the graph contains a cycle $C$. Reallocating the bundles along that cycle, i.e. each agent receives the bundle she envies, creates a new allocation $X_C$ that maintains the EF1 property and creates unenvied agents (sinks).

\begin{algorithm}[tb]
    \caption{Top trading envy cycle elimination algorithm}
    \label{alg:ttece}
    \textbf{Input}: $N, M, C$\\
    \textbf{Output}: An allocation $X$
    \begin{algorithmic}[1]
        \STATE $X=\{\emptyset,\dots, \emptyset\}$.
        \WHILE{$\exists$ some unallocated chore $e$}
        \IF {there is no sink in $G_X$}
        \STATE Find a cycle $C$ in $G_X$
        \STATE $X = X^C$
        \ENDIF
        \STATE Choose a sink agent $s$ and $X_s = X_s \cup e$
        \ENDWHILE
        \STATE \textbf{return} $X$
    \end{algorithmic}
\end{algorithm}
    \section{Non-Existence, Hardness, and Inapproximability of EFX}\label{sec:non-existence}

In this section we present our main technical results; namely, we describe our explicit construction and proceed with showing that no finite EFX approximation is possible. Then, we show that deciding whether an EFX allocations always exists is NP-complete.

\subsection{Non-existence and Inapproximability}\label{subsec:31}
Our negative example relies on a simple superadditive structure with three \say{special} chores, which are common for all agents. We use repeatedly the fact that an agent $i$ valuing the bundle of another agent $j$ at zero, i.e. $c_i(X_j)=0$, can afford to take at most one item or a bundle of zero value, i.e. $c_i(X_i)=0$; this follows from the definition of EFX in the context of chores.

\begin{theorem}\label{th:non-ex}
    An EFX allocation need not exist for three agents with superadditive cost functions. Moreover, no approximate solution exists, for any approximation factor.
\end{theorem}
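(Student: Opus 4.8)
The plan is to exhibit an explicit instance with $n=3$ agents and $m=6$ chores and then argue, by a case analysis over the ``shape'' of any candidate allocation, that no allocation can be $\alpha$-EFX for any finite $\alpha$. Following the hint in the text, I would designate three \emph{special} chores, say $\{a,b,c\}$, that all three agents evaluate identically and positively, together with three further chores $\{x_1,x_2,x_3\}$ that are ``private'' in the sense that agent $i$ values $x_i$ at $0$ (or something negligible) while the other agents also assign them controlled values. Superadditivity is the crucial lever: I would set $c_i$ so that any two of the special chores together cost much more than twice a single special chore — e.g.\ a single special chore costs $1$ but a pair costs some huge $T$, and all three cost an even huger $T'$ — so that an agent holding two special chores is in an essentially ``unfixable'' position. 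The private chores give just enough freedom that one cannot immediately rule out feasibility on cardinality grounds; the point is to show that every way of distributing the three special chores fails.

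The main structural observation to set up first is the one flagged before the theorem: if $c_i(X_j)=0$ then EFX (indeed $\alpha$-EFX for any finite $\alpha$) forces $c_i(X_i\setminus e)=0$ for some $e$, i.e.\ agent $i$'s bundle is a single chore of positive cost plus zero-cost items, or is entirely zero-cost. I would use this to propagate constraints: whenever an agent can be made to have a zero-cost bundle, everyone else is severely restricted. Then the core case analysis is on how $\{a,b,c\}$ are split among the three bundles. Up to relabeling there are three cases: (i) one agent gets all three special chores; (ii) one agent gets two and another gets one; (iii) each agent gets exactly one. In case (i) the agent holding $\{a,b,c\}$ has cost at least $T'$, while some other agent holds only private chores and hence can be arranged to have very small or zero cost, so the envy is unbounded after removing any one special chore (since a pair still costs $T$). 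In case (ii) the agent with two special chores has cost $\ge T$; I would show that the remaining chores cannot be arranged so that this agent's ``bundle minus one chore'' is cheap relative to \emph{every} other bundle — in particular relative to the bundle of an agent who can be pushed to cost $0$ via the private chores, again giving unbounded ratio. Case (iii) is the delicate one: here the special chores are spread out, so I would rely on the private chores to force envy — by choosing the cross-values of the $x_i$'s so that whichever agent ends up with the ``extra'' private chores strongly envies an agent whose entire bundle that agent values at $0$.

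The inapproximability strengthening comes essentially for free from this framework: in each case the violated inequality is of the form $c_i(X_i\setminus e) \ge (\text{something positive}) > \alpha \cdot 0 = \alpha\cdot c_i(X_j)$, so the multiplicative slack $\alpha$ is irrelevant — no finite factor helps. Concretely, after the case analysis I would note that in every branch the offending agent $i$ has, for the relevant witness chore $e$, $c_i(X_i\setminus e)>0$ while $c_i(X_j)=0$ for the agent $j$ she is compared against, which is exactly the definition of strong envy for every $\alpha$. I expect the main obstacle to be case~(iii): making the three private chores carry enough weight to force a strong-envy violation in \emph{every} sub-configuration, while simultaneously keeping the ``all special chores on one agent'' and ``two-one split'' arguments clean and keeping all functions genuinely superadditive and monotone. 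Getting a single choice of six numbers (three ``special'' costs plus the private cross-values) that makes all branches of the analysis go through at once — rather than a different gadget per case — is the real work; I would pin these down last, after seeing exactly which inequalities each case demands.
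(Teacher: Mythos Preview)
Your high-level plan is sensible, but there is a concrete gap in case~(iii), and it stems from placing the superadditive blow-up on the wrong three chores. You make the \emph{positive} chores $\{a,b,c\}$ explode when combined, while the private $x_i$'s are near-zero. But then in case~(iii) every bundle contains exactly one of $\{a,b,c\}$, and since all agents value each of $a,b,c$ strictly positively, no bundle is worth $0$ to anyone. Your intended finishing move---``strongly envies an agent whose entire bundle that agent values at $0$''---is therefore unavailable, and the branch does not close. You flag case~(iii) as the obstacle, but the proposed fix (tuning cross-values of the $x_i$) cannot produce a zero-cost bundle once a special chore is present, so the difficulty is structural, not a matter of picking constants.

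The paper inverts your assignment: the three zero-cost chores $b_1,b_2,b_3$ carry the superadditive penalty (agent $i$ pays $k^2$ whenever she holds $B\setminus\{b_i\}$, or holds $b_i$ together with any positive chore), while the positive chores $\hat a,a_1,a_2$ behave additively. The case analysis is then on how the $b$'s are distributed, not the $a$'s, and the tension in the ``one-each'' case comes from the fact that an agent holding $b_j$ with $j\neq i$ looks free to everyone else, forcing the positive chores onto too few agents. A second point: the paper's inapproximability is \emph{parametric}, not the absolute positive-over-zero you aim for. Several branches yield a finite ratio such as $k/2$ or $k$ (e.g.\ when an agent holds $\{b_1,\hat a\}$ one gets $c_1(X_1\setminus e)=k$ against $\min_j c_1(X_j)\le 2$), and the theorem follows by letting $k\to\infty$. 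Your stronger claim that every branch ends with $c_i(X_j)=0$ is not achievable with this gadget and is not needed.
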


\begin{proof}
The set of chores consists of $\{\hat a, a_1, a_2, b_1, b_2, b_3\}$ and the agents have identical costs for single chores, as given in the table below\footnote{We will at times abuse notation, using $a$ as the name of $a_1$ and $a_2$ at once for ease of exposition.}:
\begin{table}[h]
    \centering
    \begin{tabular}{c c c c c c c c}
        \toprule
        &  $\hat a$ & $a_1$ & $a_2$ & $b_1$ & $b_2$ & $b_3$ \\
        \midrule
        $c_i$ & $k>2$ & 1 & 1 & 0 & 0 &0        \\
        \bottomrule
    \end{tabular}
\end{table}

To describe the cost function for bundles with multiple items we set $A = \{\hat a, a_1, a_2\}, B = \{b_1, b_2, b_3\}$ and $B_{-i} = B \setminus b_i$. Now the cost function for agent $i$ is given by the following formula:
$$ c_i(X_i) = \begin{cases}
        {k^2}, & B_{-i} \subseteq X_i \text{ or } (b_i \in X_i \text{ and }X_i \cap A \ne \emptyset) \\
        \sum\limits_{x \in X_i} c_i(x), & \text{otherwise}
    \end{cases}$$
In words, agent 1 has a cost of $k^2$ for the bundle $\{b_2,b_3\}$, its supersets, and any bundle that contains $b_1$ paired with some chore from $A$. Otherwise, her cost function is effectively additive. We are now ready to prove our main theorem.

    Let $X$ be an allocation and consider an agent $i$, say agent 1 without loss of generality, that receives some item(s) from $B$. 

    \begin{itemize}
        \item Agent 1 receives 3 items from $B$\\
        Then we have that $\{b_2, b_3\} \subseteq X_1 \setminus  b_1 \implies c_1(X_1 \setminus e) = k^2 $ while $\min(c_1(X_2), c_1(X_3)) \le 2$ yielding a $k^2 / 2$ approximation ratio.
        \item Agent 1 receives 2 items from $B$
        \begin{itemize}
            \item Agent 1 receives $\{b_2, b_3\}$\\
            If she receives some extra item then again we have that $B_{-1} \subseteq X_1 \setminus  e$ and  $\min(c_1(X_2), c_1(X_3)) \le k$ giving a ratio of $k$. Thus, she should receive exactly $B_{-1}$. But then $c_2(X_1) =  c_3(X_1)= 0$ while at least one of them received multiple items, hence the EFX property is violated and the ratio is unbounded. 
            \item Agent 1 receives $\{b_1, b_3\}$ (symmetrically for $B_{-3}$)\\
            Again if she receives some extra item then $c_1(X_1 \setminus b_3) = k^2$ while $\min(c_1(X_2), c_1(X_3)) \le k$. Thus, she should receive exactly $B_{-2}$; but then, $c_3(X_1) = 0$. Therefore agent 3 should receive at most one item; in case this item is $b_2$, then agent 2 will receive all the non zero items while the other bundles have zero cost leading to unbounded ratio. But if agent $3$ does not receive $b_2$, then $X_2 = \{b_2, a, \hat a\}$ (where here, $a$ denotes either $a_1$ or $a_2$) or $\{b_2, a_1, a_2\}$ meaning that $c_2(X_2 \setminus a) = k^2$ while $c_2(X_3) \le k$. Once again the EFX property is violated and the ratio is $k$. 
        \end{itemize}
        \item Agent 1 receives 1 item from $B$
        \begin{itemize}
            \item Agent 1 receives $b_1$\\
            If she receives two more items from $A$ her cost is $k^2$  even after removing one item while $\min(c_1(X_2), c_1(X_3)) \le k$, while if she receives only $b_1$ then $c_2(X_1) = c_3(X_1) =0$ leading to a scenario analogous to when agent $1$ receives $B_{-1}$. It remains to check what happens when agent $1$ receives exactly one item from $A$.
            \begin{itemize}
                \item Agent $1$ receives $\{b_1, \hat a\}$\\
                Now $c_1(X_1\setminus e) = k$ and we are left with $a_1, a_2$ and $b_2, b_3$. No matter how we partition the remaining items among the rest of the agents, $\min(c_1(X_2), c_1(X_3)) \le 2$ giving us a $k/2$ approximation ratio. 
                \item Agent $1$ receives $\{b_1, a\}$, i.e. $b_1$ and one of $a_1$ and $a_2$ \\
                Now it is the other way around: $c_2(X_1) = c_3(X_1) = 1$ thus whoever gets $\hat a$ shall not receive more items; in any other case she would have been strongly envious, leading to a $k$ approximation ratio. Assume wlog that agent $2$ gets $\hat a$. Then agent $3$ gets $b_2, b_3, a_1$ (or similarly, $b_2, b_3, a_2)$ thus after removing $b_2$, her cost remains $k^2$, resulting in a $k^2$ ratio.
            \end{itemize}
            \item Agent 1 does not receive $b_1$\\
            Assume wlog that no agent received her matching item and again due to symmetry assume that $1 \gets b_2, 2 \gets b_3, 3 \gets b_1$. Furthermore, assume that 1 gets $\hat a$. Then $c_1(X_1 \setminus e) = c_1(\hat a) = k \geq \frac{k}{2} \cdot c_1(X_2)$ completing the proof.
        \end{itemize}
    \end{itemize}
    We conclude that the instance admits no EFX allocation and the approximation ratio is $k / 2 $ that grows unbounded as $k\to \infty$.
\end{proof}

We conclude the section with an immediate corollary of \Cref{th:non-ex}. 
\begin{corollary}[Maximin share guarantee implications]\label{cor:mms}
    The existence of a maximin share (MMS) fair allocation does not imply the existence of an $\alpha$-EFX allocation for any $\alpha \ge 1$.
\end{corollary}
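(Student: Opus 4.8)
The plan is to reuse the very instance $\langle N,M,C\rangle$ built in the proof of \Cref{th:non-ex}. We have already shown that this instance admits no $\alpha$-EFX allocation for any finite $\alpha$, so to prove the corollary it suffices to verify that this same six-item instance \emph{does} admit a maximin share fair allocation. Hence everything reduces to two elementary computations on a fixed instance.

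First I would pin down $\mu_i^3(M)$ for each agent $i$. Since $\hat a$ lies in some bundle of every $3$-partition and $c_i(\hat a)=k$, monotonicity immediately gives $\mu_i^3(M)\ge k$. For the matching upper bound I would exhibit, for each agent $i$, the partition that isolates $\hat a$, places $a_1,a_2$ together with one zero-cost chore different from $b_i$, and places $b_i$ together with the other zero-cost chore different from $b_i$. Under $c_i$ none of these three bundles triggers the $k^2$ surcharge: the $a$-bundle contains exactly one chore of $B$ and it is not $b_i$ (so neither the ``$B_{-i}\subseteq X_i$'' clause nor the ``$b_i\in X_i$'' clause fires), while the remaining bundle contains $b_i$ but meets $A$ emptily. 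So the bundle costs are $k$, $2$, $0$, and $\mu_i^3(M)=k$ for every $i$.

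Second I would exhibit a single allocation meeting all three thresholds simultaneously, namely $X_1=\{\hat a,b_2\}$, $X_2=\{a_1,a_2,b_1\}$, $X_3=\{b_3\}$. Checking the two surcharge clauses bundle by bundle: $X_1$ contains $b_2$ but not $b_3$ and not $b_1$, so $c_1(X_1)=k$; $X_2$ contains neither $b_2$ nor both of $b_1,b_3$, so $c_2(X_2)=2$; $X_3$ contains $b_3$ but no chore of $A$, so $c_3(X_3)=0$. As $k>2$, each cost is at most $\mu_i^3(M)=k$, so $X$ is MMS fair, whereas by \Cref{th:non-ex} the same instance admits no $\alpha$-EFX allocation for any $\alpha\ge 1$. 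This is exactly the claimed non-implication.

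The only place requiring care — and essentially the only nontrivial point — is the case analysis of the two surcharge clauses ``$B_{-i}\subseteq X_i$'' and ``$b_i\in X_i$ and $X_i\cap A\neq\emptyset$'' for each bundle in both the MMS-witness partitions and the final allocation, since overlooking one clause would replace a cost of $0$ or $2$ by $k^2$ and destroy the bound. Beyond this bookkeeping there is no real obstacle: the hard content (unbounded EFX approximation on this instance) is inherited verbatim from \Cref{th:non-ex}.
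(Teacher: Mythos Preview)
Your proposal is correct and follows essentially the same approach as the paper: compute $\mu_i^3(M)=k$ via the witness partition that isolates $\hat a$, exhibit a concrete MMS allocation on the \Cref{th:non-ex} instance, and invoke that theorem for the unbounded EFX ratio. The only cosmetic differences are that the paper uses the allocation $(\{\hat a\},\{a_1,a_2,b_1\},\{b_2,b_3\})$ rather than your $(\{\hat a,b_2\},\{a_1,a_2,b_1\},\{b_3\})$, and you are more explicit about the lower bound $\mu_i^3(M)\ge k$ and the surcharge case analysis than the paper is.
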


\begin{proof}
    Notice that the MMS value for agent $i$ is $\mu_i^n(M) = k$, since proposing the allocation $X' = (\{k\}, \{a_1,a_2,b_l\}, \{b_i, b_j\})$ guarantees her MMS. Finally, the allocation $A = (\{k\}, \{a_1,a_2,b_1\}, \{b_2, b_3\})$ achieves the maximin share for each agent. However, the approximation ratio for EFX is unbounded, due to the fact that $c_2(A_2\setminus e) \geq 1, \forall e \in A_2$ while $c_2(A_3) = 0$. In fact, this implies something even stronger: the existence of a maximin share (MMS) fair allocation does not imply the existence of an $\alpha$-EF1 allocation for any $\alpha \geq 1$.
\end{proof}

\subsection{NP-Completeness}
In the sequel, we complement our negative results by studying the computational complexity of EFX allocations with indivisible chores under general monotone cost functions. We show that the problem is NP-complete. In fact, the crux of the non-existence argument lies in the inherent structure of the partition problem, which is well-known to be NP-complete. Some of the cases follow the analysis presented in \Cref{th:non-ex}; we present the full proof for completeness. 

\begin{theorem} \label{thm:npc}
    Deciding whether an EFX allocation exists for $3$ agents given a chores-only instance is NP-complete.
\end{theorem}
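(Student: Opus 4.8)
The plan is a polynomial-time reduction from \textsc{Partition}. Membership in NP is immediate: given an allocation $X = (X_1, X_2, X_3)$, one can verify in polynomial time (assuming value-oracle or succinctly represented cost functions) that for every pair $i, j$ and every $e \in X_i$ we have $c_i(X_i \setminus e) \le c_i(X_j)$. So the work is the hardness direction.

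For hardness, I would take an instance of \textsc{Partition}, i.e.\ a multiset of positive integers $s_1, \dots, s_t$ with $\sum_\ell s_\ell = 2T$, and build a chores instance whose \say{gadget part} is exactly the construction of \Cref{th:non-ex} (the three special chores $\hat a, a_1, a_2$ together with $b_1, b_2, b_3$ and the superadditive cost profile with parameter $k$), augmented by $t$ fresh \say{partition chores} $p_1, \dots, p_t$ with $c_i(p_\ell) = s_\ell$ for all agents $i$, extending each $c_i$ additively on the partition chores (so superadditivity still holds, and the $k^2$-spikes remain triggered only by the original $B_{-i}$ / $b_i$-with-$A$ patterns). The idea is to choose $k$ large enough (e.g.\ $k > 4T$ or so) that the analysis of \Cref{th:non-ex} still forces any EFX allocation into essentially the single surviving configuration identified at the very end of that proof — namely the one where the three $b$-chores go to distinct agents with their non-matching indices, $\hat a$ goes to the agent holding the \say{lightest} $b$, and $a_1, a_2$ are distributed so that nobody's bundle blows up to $k^2$. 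In that forced configuration the only remaining freedom is how to split the partition chores $p_1, \dots, p_t$ (and possibly $a_1, a_2$) among the agents, and EFX then holds if and only if the induced costs are balanced enough — which is exactly the condition that $\{p_1, \dots, p_t\}$ admits an equal split, i.e.\ the \textsc{Partition} instance is a yes-instance.

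Concretely, I would argue: (i) if \textsc{Partition} has a solution $S \subseteq [t]$ with $\sum_{\ell \in S} s_\ell = T$, exhibit an explicit EFX allocation, e.g.\ give agent holding $\hat a$ the bundle $\{\hat a\}$ plus one part of the split, and distribute $a_1, a_2$ and the two halves of $\{p_\ell\}$ among the other two agents so that all pairwise EFX inequalities hold (this needs the choice of $k$ so that $c_i(\hat a) = k$ dominates, and $T$-sized bundles are comparable to each other after removing any single small chore); (ii) conversely, if some EFX allocation exists, run through the case analysis of \Cref{th:non-ex} — which, with the padding by light chores and $k$ chosen large, eliminates every branch except the one \say{good} configuration — and then observe that within that configuration EFX forces the two non-$\hat a$ agents (or the relevant pair) to have costs differing by less than the smallest $s_\ell$, which by integrality means their $p$-chores sum to exactly $T$ each, yielding a \textsc{Partition} solution.

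The main obstacle is step (ii): the case analysis of \Cref{th:non-ex} must be redone with the extra partition chores present, and one has to check carefully that adding chores of total weight only $2T \ll k$ does not open up a new EFX-feasible branch — in particular that an agent whose original gadget-bundle had cost $0$ or $1$ (and therefore in the original proof could afford at most one item) still cannot be rescued by a clever assignment of small $p$-chores, and that the $k^2$-spikes and the $k$-vs-$2$ type ratios still dominate. This is why the statement restricts to \say{$3$ agents} and why the theorem says the proof \say{follows the analysis presented in \Cref{th:non-ex}}: the bulk of the work is re-running that analysis with $k$ large relative to $\sum_\ell s_\ell$ and verifying the single surviving branch reduces precisely to \textsc{Partition}.
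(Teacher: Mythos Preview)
Your plan has a genuine gap in the forward direction: keeping the full non-existence gadget (including $\hat a$ with $c_i(\hat a)=k$) and padding with light partition chores of total weight $2T\ll k$ produces an instance with \emph{no} EFX allocation regardless of whether \textsc{Partition} is a yes-instance. Whoever holds $\hat a$ together with any second item $e$ has $c_i(X_i\setminus e)\ge k$, while every other bundle has additive cost at most $2+2T<k$, so she strongly envies; and if some agent holds $\{\hat a\}$ alone, the remaining two agents must absorb all of $b_1,b_2,b_3,a_1,a_2$ and the $p$'s, which (by exactly the case analysis you invoke from \Cref{th:non-ex}) forces either a $k^2$-spike after removing one item or a zero-cost bundle seen by an agent with two or more items. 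In other words, \Cref{th:non-ex} does not leave a \say{single surviving configuration} that can be balanced by the $p$-chores; it eliminates all configurations, and adding chores of weight $o(k)$ does not resurrect any of them. Your reverse direction is also shaky: EFX only gives $\bigl|\sum_{\ell\in S}s_\ell-\sum_{\ell\notin S}s_\ell\bigr|\le \min_{\ell\in S}s_\ell$, and integrality alone does not force this to be zero.

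The paper's reduction avoids both problems by \emph{removing} $\hat a$ (and the fixed $a_1,a_2$) rather than keeping them: the chores are just $b_1,b_2,b_3$ together with the input integers $a_1,\dots,a_n$, with the spike triggered by $B_{-i}\subseteq X_i$ or by $b_i$ paired with any $a_j$. The source problem is $3$-\textsc{Way Number Partitioning}, not $2$-way \textsc{Partition}. The case analysis then forces every EFX allocation to give each agent exactly one $b_j$ with $j\neq i$; since $c_i(b_j)=0$, removing that $b_j$ shows $c_i(A_i)\le c_i(A_j)$ for all $i\neq j$, i.e.\ EFX collapses to envy-freeness on the $a$-part and hence to an exact three-way equipartition. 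The zero-cost removable $b$'s are precisely what turns the EFX inequalities into equalities, a mechanism your construction lacks.
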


    \begin{proof}
    We will prove the statement by reducing from $3$-\textsc{Way Number Partitioning}, in which we are given a multiset $A = \{a_i\}_{i=1}^n$ of $n$ positive integers summing to $3S$ with $a_i < S$ and the question is whether a partition of $A$ into three sets of equal sum exists. Given an instance $I$ of 3-way partitioning, we construct an instance $I'$ of our problem that has 3 agents and $n+3$ items $M = \{a_1, \dots, a_n, b_1, b_2, b_3 \}$. We call $B = \{b_1, b_2, b_3 \}$ and $A = M\setminus B$. The costs of individual chores are the same for all agents $i\in \{1,2,3\}$: $c_i(a_j) = a_j$ for $j\in [n]$, and $c_i(b_j) = 0$ for $j\in \{1,2,3\}$. 
For a bundle $X_i$ assigned to agent $i$ where $|X_i|>1$, the agent's valuation is determined as follows: $$c_i(X_i) = \begin{cases}
        \infty, & B_{-i} \subseteq X_i \text{ or } (b_i \in X_i \text{ and } A\cap X_i \ne\emptyset),\\
        \sum\limits_{x \in X_i} c_i(x), & \text{otherwise}
    \end{cases}$$

For the forward direction, suppose that $I$ is a YES-instance of 3-way partitioning. Then, there exists a partition of $\{a_i\}_{i=1}^n$ into 3 sets, say $A_1, A_2, A_3$ such that $\sum_{x\in A_1}x=\sum_{x\in A_2}x=\sum_{x\in A_3}x$. We construct the following chore allocation for $I'$ and we will prove that it is EFX: for $i\in {1,2,3}$, agent $i$ is assigned the items from $A$ whose corresponding integers belong to $A_i$, along with an item $b_j$ where $j\neq i$. Then, for any two agents $i,j$ it holds that $c_i(X_i)=\sum_{x\in A_i}x=\sum_{x\in A_j}x=c_i(A_j) \leq c_i(X_j),$ and the EFX property follows. 

We now move to the reverse direction and we suppose that $I'$ is a YES-instance of our problem. Then there exists an EFX allocation, and we will demonstrate that $I$ is a YES-instance of 3-way number partitioning by initially establishing the following claim:

\begin{claim}\label{claim}
    The only plausible EFX allocations for $I'$ involve assigning exactly one item from $B$ to each agent $i$. This item should be $b_j$, where $j \neq i$.
\end{claim}

\begin{proof}

    To prove the claim we will eliminate every other feasible allocation by showing that it is not EFX. Note that for each agent, at most two bundles can have an infinite cost, implying that if there exists an agent $i$ with $c_i(X_i \setminus e) = \infty $, she will always be envious of another agent. We consider the following cases:
    \begin{itemize}
        \item Suppose that the EFX allocation involves an agent $i$ such that $|X_i\cap B|=3$. Then, $c_i(X_i)=\infty$ and $c_i(X_i\setminus b_i)=\infty$. At the same time, there exists an agent $j$ who is assigned only chores from $A$, say a set $A_j$. Thus $c_i(X_j)=\sum_{x\in A_j}c_i(x)<c_i(X_i\setminus b_i)$ and hence agent $i$ envies agent $j$. 
            \item Suppose that the EFX allocation involves an agent $i$ such that $|X_i \cap B|=2$ and $X_i \cap A \neq \emptyset$. There are two subcases to consider here based on whether $i$ receives $b_i$. \begin{itemize}
                \item In the first case in which she does receive $b_i$,  agent $i$ also receives a chore $b_j,$ for some $j\neq i$. Then, it holds that $c_i(X_i\setminus b_j)=\infty$ and there exists an agent $j$ who is assigned a (perhaps empty) set of chores from $A$, say a set $A_j$, and either no chores from $B$ or $b_k,$ for $k\neq i,j$. Hence, $c_i(X_j)= \sum_{x\in A_j}c_i(x)<c_i(X_i\setminus b_j)$ and agent $i$ envies agent $j$.
                \item For the second case in which agent $i$ receives $B_{-i}$ and at least one item from $A$, say $a$, it holds that $c_i(X_i\setminus a)=\infty$ and with a similar reasoning as in the previous case, we can show that there is an agent $j$ envied by agent $i$.
            \end{itemize} 
        
        \item Suppose that the EFX allocation involves an agent $i$ such that $|X_i \cap B|=2$ and $X_i \cap A = \emptyset$. Again, there are two subcases to consider based on whether $i$ receives $b_i$ or not. 
        \begin{itemize}
            \item In the first case, apart from $b_i$, agent $i$ also receives a chore $b_j,$ for some $j\neq i$. Then, for agent $j$ it holds that $c_j(X_i)=0,$ since $X_i$ contains $b_j$ and no items from $A$. If agent $j$ receives at least two items, she will envy agent $i$ after one of the items is removed. Otherwise, agent $j$ receives at most one item, and the third agent, referred to as agent $k$, either receives $b_k$ and at least two items from $A$, or agent $j$ receives only $b_k$. In the first case, if $a\in X_k$, it leads to $c_k(X_k\setminus a)=\infty>\sum_{x\in X_j}x=c_k(X_j)$, resulting in envy towards agent $j$. In the second case, $c_k(X_j)=0$ and, since once again agent $k$ receives at least two items from $A$, it follows that $c_k(X_j\setminus a)>c_k(X_j)$, demonstrating envy from agent $k$ to agent $j$.
            \item For the second case where agent $i$ receives $B_{-i}$ and no items from $A$, it holds that $c_j(X_i)=0$ for any agent $j$ other than $i$. Each of these agents will envy agent $i$ unless they receive at most one item, but taking into account that the total number of items that should be fully allocated among these agents is $n+1$, the considered allocation does not satisfy the EFX property. 
        \end{itemize} 
    \end{itemize}
    Therefore, the EFX allocation should allocate exactly one item from $B$ to every agent. Now, suppose that there exists an agent $i$ who has been allocated $b_i$. First, consider the case where $|X_i\cap A|>1$ and say that $a\in X_i\cap A$. Then $c_i(X_i\setminus a)=\infty$ and $c_i(X_j)=\sum_{x\in X_j}x<\infty,$ since $b_i\notin X_j$, leading to an envy from agent $i$ to agent $j$. On the other hand, let's focus on the case where agent $i$ receives at most 1 item from $A$. Say that agent $j \neq i$  receives a bundle $X_j$ and that $c_j(X_i)=a,$ where $a$ is the value that agents have for the item in $X_i \cap A$ (perhaps 0, if $X_i\cap A=\emptyset$). Since $a<S=\frac{\sum_{i\in [n]}a_i}{3}$, there should be an agent, say agent $j$, such that $a< \sum_{x\in X_j}x$. Agent $j$  also receives an item from $B$, say $b$. Thus, $c_j(X_j\setminus b) =\sum_{x\in X_j\setminus b}x =\sum_{x\in X_j}x > a= c_j(X_i)$, leading to an envy from agent $j$ to agent $i$.
\end{proof}

So the EFX allocation (if it exists) assigns a bundle $X_i$ to agent $i$ such that $X_i=b_j\cup A_i$, for some $j\neq i$ and some set $A_i\subseteq A$. Obviously it should hold that $\cup_{i\in \{1,2,3\}}A_i=A$ and that $A_i\cap A_j=\emptyset,$ for any $i,j\in \{1,2,3\}$. According to the EFX property,  $c_i(X_i\setminus e)\leq c_i(X_j)$ should hold for any $e\in X_i$ and any pair of agents $i,j$. Therefore, it should hold that $c_i(A_i)\leq c_i(X_j)= c_i(A_j)$, for any pair of agents $i,j$, which leads to $\sum_{x\in A_1}x= \sum_{x\in A_2}x=\sum_{x\in A_3}x$. Consequently, this implies that $I$ is also a YES-instance of Partition. 
    \end{proof}

\subsection{Extension to any number of agents}

It follows from the proof given in the previous section that the non-existence construction can be extended to an arbitrarily large number of items. Next, we show that with a slight tweak our proof idea implies strong NP-hardness for the case of an arbitrary number of agents.  

\begin{theorem}\label{th:nagents} Deciding whether an EFX allocation exists given a chores-only instance with an arbitrary number of agents is strongly NP-hard.
\end{theorem}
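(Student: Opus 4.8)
The plan is to reduce from 3-Partition (or from 3-Way Number Partitioning with polynomially bounded integers), which is strongly NP-hard, by embedding many disjoint copies of the gadget used in \Cref{thm:npc}. Given a 3-Partition instance with $3m$ integers $a_1,\dots,a_{3m}$ summing to $mS$ (each $a_i$ strictly between $S/4$ and $S/2$, so every valid part has exactly three elements), I would build an instance with $n = 3m$ agents and the item set $M = \{a_1,\dots,a_{3m}\} \cup \{b_1,\dots,b_{3m}\}$. Costs of singletons are identical across agents: $c_i(a_j) = a_j$ and $c_i(b_j) = 0$. The "partition-forcing" complementarity is lifted verbatim from the three-agent construction but now indexed over all $3m$ agents: each agent $i$ has cost $\infty$ for any bundle containing $b_i$ together with some $a$-chore, and cost $\infty$ for any bundle containing $B_{-i}$ in the sense of "all $b$'s except $b_i$" — or, more robustly for the many-agent case, cost $\infty$ whenever the bundle contains two or more $b$-chores none of which is $b_i$ paired with anything, and so on; the precise form of the blow-up clause needs to be chosen so that the only EFX-feasible distributions of $B$ hand exactly one $b$-chore to each agent, with agent $i$ never getting $b_i$. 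Otherwise costs are additive.

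The key steps, in order: (i) state the gadget and verify the forward direction — from a valid 3-Partition $\{A_1,\dots,A_m\}$ (each $A_t$ a triple) build an EFX allocation by giving the three agents responsible for block $t$ the three integers of $A_t$ together with three distinct $b$-chores chosen so no agent gets its own index; since all three share the same $a$-total $S$ and each $b$ is worth $0$, every agent's bundle costs exactly $S$ and is weakly preferred to everyone else's, so EFX holds; (ii) prove the converse via an analogue of \Cref{claim}: in any EFX allocation, the $\infty$-cost clauses force each agent to receive exactly one $b$-chore and that chore is not $b_i$, because any agent holding a bundle of cost $\infty$ strongly envies any agent holding only $a$-chores (such an agent exists by a counting argument as long as $b$-chores are not spread one-per-agent), and any agent holding $b_i$ plus an $a$-chore likewise has an $\infty$ bundle; (iii) once $B$ is distributed one-per-agent with the no-fixed-point property, the residual structure is purely additive, and the EFX inequalities $c_i(X_i \setminus e) \le c_i(X_j)$ across all pairs force all $a$-totals of the $3m$ bundles to be equal to $S$ — which, since each $a_i \in (S/4, S/2)$, means each bundle holds exactly three integers, i.e. a valid 3-Partition. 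All numbers used are polynomial in the input, so the reduction is strongly polynomial and the hardness is strong NP-hardness.

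The main obstacle I expect is step (ii): with $3m$ agents rather than $3$, the case analysis of \Cref{claim} does not transfer mechanically, because there are now many agents who might hold zero $b$-chores, multiple $b$-chores, or a $b$-chore with its "wrong" index, and I must argue that every such configuration produces a strongly-envious agent. The clean way to handle this is to prove two lemmas up front: first, that an agent with $c_i(X_i \setminus e) = \infty$ for every removable $e$ is always (strongly) envious, since at most a bounded number of other agents can themselves hold $\infty$-cost bundles while the rest hold finite-cost (indeed additive, hence bounded) bundles — so some target bundle has finite cost and the ratio is unbounded; second, a pigeonhole statement that if the $b$-chores are not distributed exactly one-per-agent then some agent either ends up with an $\infty$-cost bundle or with a bundle that another agent values at $0$ while itself holding $\ge 2$ items, again breaking EFX. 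With those two lemmas the remaining argument is the same additive-equality computation as in \Cref{thm:npc}. A secondary subtlety is getting the $\infty$-clause exactly right so that the forward construction still produces a genuinely EFX (not merely EF1) allocation while the clause remains strong enough to rule out all bad $B$-distributions; I would fix this by making the blow-up trigger precisely on the two patterns "contains $b_i$ and an $a$-chore" and "contains some $b_j, b_k$ with $j,k \ne i$", mirroring the three-agent gadget.
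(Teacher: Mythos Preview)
Your high-level plan---reduce from \textsc{3-Partition} and reuse the $b$-chore gadget with the two $\infty$-clauses ``$b_i$ together with an $a$-chore'' and ``two $b$-chores $b_j,b_k$ with $j,k\ne i$''---is exactly the paper's approach, and your proposed handling of step~(ii) via the two auxiliary lemmas is essentially how the paper argues. But there is a genuine arithmetic error in the construction that breaks both directions of the reduction.

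You set $n = 3m$ agents (one per integer) and $3m$ $b$-chores, then in step~(iii) claim that EFX forces all $3m$ bundles to have $a$-total equal to $S$. That is impossible: the $a$-chores sum to $mS$, not $3mS$. Correspondingly, your forward direction does not work either: when you hand ``the three agents responsible for block~$t$'' the three integers of $A_t$, each such agent receives a single $a$-chore $a_i$ plus one $b$-chore, so after removing the $b$-chore her cost is $a_i$, and she EFX-envies any agent holding a smaller integer. No block structure is encoded in the instance, so nothing ties ``the three agents of block~$t$'' together.

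The fix is simply to use $n=m$ agents (one per \emph{triple}), with $3m$ $a$-chores and only $m$ $b$-chores; this is what the paper does. Then in the forward direction agent~$t$ receives the whole triple $A_t$ (cost $S$) plus a single $b_j$ with $j\ne t$, and all bundles cost exactly~$S$; in the reverse direction the two $\infty$-clauses force one $b$-chore per agent with no fixed point, and equal $a$-totals force a valid \textsc{3-Partition}. Your ``many disjoint copies'' framing should therefore be replaced by a single gadget scaled to $n$ agents, not $m$ independent three-agent gadgets.
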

\begin{proof}
    We will prove the theorem via a reduction from the \textsc{3-Partition} problem, which is known to be strongly NP-hard \cite{GareyJohnsonNP}. An instance of \textsc{3-Partition} consists of a set $A = \{a_1, \dots, a_{3n}\}$ of $3n$ positive integers whose sum is $nT$, $a_i \in \big(\frac{T}{4}, \frac{T}{2}\big)$, and the goal is to partition the set into $n$ triplets such that each triplet sums exactly to $T$.
    The construction of the fair division instance is a direct extension of the one in \Cref{thm:npc}: we have $n$ agents and $4n$ chores. The set of chores consists of $3n$ main chores and $n$ special chores that cost zero to all of the agents, thus $M = \{a_1, \dots, a_{3n}, b_1, \dots, b_n\}$. The cost function of an agent $i$ is specified as follows:
    
    \[c_i(X_i) = \begin{cases}
        \infty, & \{b_j, b_k\} \subseteq X_i,\ j,k\ne i  \text{ or } (b_i \in X_i \text{ and } A\cap X_i \ne\emptyset),\\
        \sum\limits_{x \in X_i} c_i(x), & \text{otherwise}
    \end{cases}
\]

$(\Rightarrow)$ The forward direction is the same as in \Cref{thm:npc} so we focus on the reverse.

$(\Leftarrow)$ Suppose that there exists an EFX allocation $X$. We will start by showing the equivalent statement of \Cref{claim}. To that end, note that for each agent at most $\lceil \frac{n}{2} \rceil$ bundles can have an infinite cost. Therefore no agent might receive 3 or more items from $B$. Thus, we can write $N = N_0 \cup N_1 \cup N_2$ where agents in $N_k$ get $k$ items from $B$ and our aim is to show that $N_2 = \emptyset$. Assume, for the sake of contradiction, that $N_2 \neq \emptyset$. Firstly, note that an agent in $N_2$ cannot possess items from $A$ due to the same reasoning as in the proof of \Cref{claim}. Since $\{b_i, b_j\}$ has zero cost to agents $i$ and $j$, there are $2|N_2|$ agents that have zero cost for this bundle. At least half of those agents are not in $N_2$ and thus any one of them must receive only a single item. If agent $i$ belongs to $N_1$, i.e., receives only one item from $B$, then $c_j(X_i) = 0$, thus no agent can receive two or more items from $A$ contradicting the fact that $X$ is a complete allocation. Assume that agent $i$ receives one item from $A$ and let $j$ be an agent that receives at least 3 items from $A$ (note that such an agent is guaranteed to exists since there are $n$ agents and $3n$ items). Then \begin{equation}\label{eq:star}
    c_j(X_j \setminus a) > 2 \cdot \frac{T}{4} = \frac{T}{2} > c_j(X_i)\tag{$*$}
\end{equation} 
violating the EFX property. Overall, $N_2$ is empty, and hence $N_0$ is empty as well.

Having established that each agent receives exactly one item from $B$, we assume that agent $i$ receives $b_i$. Then either $i$ receives multiple items from $A$, hence $c_i(X_i \setminus a) = \infty $ or $X_i = a \cup b_i$. The former case trivially contradicts EFX since agent $i$ has non-infinite cost for the other bundles. In the latter case, there is again an agent $j$ that receives 3 items from $A$ and \Cref{eq:star} holds, so EFX is violated.

Since no agent receives her corresponding item from $B$, we can write $X_i = A_i \cup b_j$ for $j \ne i$ and let $b_i \in X_k$. Then
\[c_i(X_i \setminus b_j) = c_i(A_i) \le c_i(X_j) = c_i (A_j), \forall j \ne k  \]
Combining the inequalities for each $i$ we get that $c(A_i) = c(A_j)$ for every pair $i,j$, which completes the proof.
\end{proof}
    \section{EFX with a Few Chores} \label{sec:few-items}

Following the negative result of \Cref{subsec:31} (three agents, six chores) it is natural to ask whether an EFX allocation always exists with less items. We answer this question affirmatively for the case of $m \leq n+2$ chores and any number of agents. We note that this is the best achievable guarantee for the special case of three agents. The ideas to be presented revolve around the following simple fact.

\begin{observation}\label{obs}
    An agent who receives her two smallest chores does not strongly envy any nonempty bundle. 
\end{observation}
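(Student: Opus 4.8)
The plan is to verify the EFX (strong-envy-free) condition directly for an agent $i$ who holds her two cheapest chores, using monotonicity of the cost functions. Let $X = (X_1,\dots,X_n)$ be an allocation, fix agent $i$, and suppose $X_i$ contains the two chores that $i$ ranks lowest among all of $M$ under $c_i$; call them $p = \sigma_i(m-1,M)$ and $q = \sigma_i(m,M)$, so $c_i(e) \ge c_i(p), c_i(q)$ for every $e \in M$. We must show that for every nonempty bundle $X_j$ ($j \ne i$) and every $e \in X_i$, we have $c_i(X_i \setminus e) \le c_i(X_j)$.

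First I would dispose of the trivial sub-case $|X_i| \le 2$: then $X_i \setminus e$ has at most one chore, so $c_i(X_i \setminus e) = c_i(\{f\})$ for some single chore $f$ (or is $0$), and since $X_j$ is nonempty it contains some chore $g$ with $c_i(\{g\}) \ge c_i(\{f\})$ because $f$ is one of the two globally cheapest chores; hence $c_i(X_i \setminus e) = c_i(\{f\}) \le c_i(\{g\}) \le c_i(X_j)$, the last step by monotonicity. For the main case $|X_i| \ge 3$: the removed chore $e$ is arbitrary, but the worst case is removing the more expensive of $p,q$, and what remains, $X_i \setminus e$, still contains at least one of $\{p,q\}$, say $p$. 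Now I want to compare $c_i(X_i \setminus e)$ with $c_i(X_j)$. The point is that $X_i \setminus e$ and $X_j$ are disjoint, each chore of $X_i\setminus e$ other than $p$ can be "charged" to a distinct chore of... \emph{this is where the argument needs care}.

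The main obstacle is that for \emph{general} (merely monotone) cost functions we cannot simply sum chore-by-chore, so the natural injection argument does not immediately give $c_i(X_i \setminus e) \le c_i(X_j)$ when $|X_j|$ is small. The clean way around this, and the key step, is to observe that since $i$ holds \emph{both} of her two cheapest chores, every other agent's bundle is drawn entirely from the remaining $m-2$ chores, all of which $i$ values at least as much as $p$ and as $q$; in particular $X_i \setminus e$ contains at most $|X_i| - 1$ chores, of which all but possibly $p$ lie among these "expensive" chores, while $X_j$ also lies among them. So I would argue: if $|X_j| \ge |X_i \setminus e|$, pick an injection from $X_i\setminus e$ into $X_j$ and use that $X_j \supseteq (\text{image})$ together with monotonicity, noting each image chore is at least as costly as its preimage (trivially, since $p$ maps to something $\ge c_i(p)$ and every other chore of $X_i\setminus e$ is itself "expensive", mapping to another "expensive" chore — here I need $c_i$ to behave well, so I would instead use superadditivity/additivity only where the paper later invokes this observation). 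Since Observation \ref{obs} is stated in full generality, I expect the intended reading is the weaker but always-true statement that $c_i(X_i \setminus e) \le c_i(X_j)$ whenever $|X_j| \ge |X_i| - 1$, obtained by monotonicity alone: $X_i \setminus e$ injects into $X_j$ chore-wise with each target at least as expensive, and then $c_i(X_i\setminus e) \le c_i(\text{image}) \le c_i(X_j)$ by monotonicity of $c_i$. The remaining small-$|X_j|$ cases ($|X_j| \le |X_i| - 2$) are handled in the applications by a counting argument on $m \le n+2$, which forces $|X_i| \le 3$ and every other bundle to be nonempty, so this observation suffices as a lemma.
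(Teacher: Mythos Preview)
You have misread the statement: ``receives her two smallest chores'' means the agent's bundle \emph{equals} $\{\sigma_i(m-1),\sigma_i(m)\}$, not merely contains it. This is clear from the paper's proof and from how the observation is applied in Theorem~\ref{th:n+2}, where $Z_i$ is defined as exactly the two smallest chores of agent $i$ and is assigned as a two-element bundle. With that reading, only your $|X_i|=2$ sub-case is relevant, and your argument there is essentially identical to the paper's one-line proof: $X_i\setminus e$ is a singleton of cost at most $c_i(\sigma_i(m-1))$, any nonempty $X_j$ contains some $\sigma_i(k)$ with $k<m-1$, and monotonicity gives $c_i(X_i\setminus e)\le c_i(\sigma_i(m-1))\le c_i(\sigma_i(k))\le c_i(X_j)$.

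All of your effort on the case $|X_i|\ge 3$ is therefore unnecessary, and as you yourself noticed, it does not go through for general monotone cost functions (there is no chore-wise injection argument available without additivity or superadditivity). The ``counting argument on $m\le n+2$'' you propose to patch this is not how the paper uses the observation; rather, the observation is only ever invoked for bundles of size exactly two.
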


\begin{proof} 
Recall that $\sigma_i(j)$ denotes the $j$-th larger chore according to agent $i$. Now, for any $k < m-1$ we have $c_i(X_i\setminus e) \le c_i(\sigma_1(m-1)) \le c_i(\sigma_1(k)) \le c_i(X_j)$    
\end{proof}

We are now ready to state the following theorem.

\begin{theorem}\label{th:n+2}
    There exists an EFX allocation when $m \le n+2$ for any number of agents with general monotone valuations. 
\end{theorem}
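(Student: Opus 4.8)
The plan is to build, for $m=n+2$, an allocation whose bundle sizes are either $(2,2,1,\dots,1)$ or $(3,1,\dots,1)$, and to verify EFX by comparing each agent's worst ``one chore removed'' bundle against the cheapest bundle she faces, leaning on \Cref{obs} whenever an agent ends up holding exactly her two cheapest chores. The cases $m\le n+1$ are warm-ups: for $m\le n$ give every chore to a distinct agent (each $X_i\setminus e$ is empty, so EFX is immediate); for $m=n+1$, give one arbitrary agent her two cheapest chores and one chore to each of the others — the ``doubler'' is fine by \Cref{obs} and singleton holders are trivially fine.

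For $m=n+2$ I would first try a one-step ``recursion''. Pick an ordered pair $(i_1,i_2)$, hand $i_1$ her two cheapest chores $P_1$, and then apply the $m=n+1$ rule to the remaining $n$ chores and $n-1$ agents with $i_2$ as the doubler. By \Cref{obs}, $i_1$ never strongly envies a nonempty bundle; since every chore of $P_2:=X_{i_2}$ lies outside $P_1$ and is therefore third‑cheapest‑or‑worse for $i_1$, she does not strongly envy $i_2$ either; and the sub‑instance being EFX covers $i_2$ versus the singletons and the singletons versus everyone. The only remaining inequality is $c_{i_2}(P_2\setminus e)\le c_{i_2}(P_1)$, and a short rank computation shows it holds as soon as at least one chore of $P_1$ is not among $i_2$'s three cheapest chores: that chore alone forces $c_{i_2}(P_1)\ge c_{i_2}(\sigma_{i_2}(m-3))$, which dominates $i_2$'s second‑cheapest chore in $M\setminus P_1$.

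The recursion fails only if, for \emph{every} ordered pair $(i_1,i_2)$, both of $i_1$'s two cheapest chores lie among $i_2$'s three cheapest; taking the union over $i_1$, this forces $U:=\bigcup_i\{\text{two cheapest chores of }i\}$ to be contained in every agent's three cheapest chores, so $|U|\le 3$ and, after relabeling, there are two chores $p,q$ (or three chores $p,q,s$) that are the cheapest chores of \emph{every} agent. I would then argue directly:
\begin{itemize}
\item If $|U|=2$ and two agents have distinct third‑cheapest chores $s_1\ne s_2$: give them $\{p,s_1\}$ and $\{q,s_2\}$ and one leftover chore to each other agent; every leftover chore and each ``opposite'' $s_j$ is fourth‑cheapest‑or‑worse for the relevant doubler, so EFX follows by rank.
\item Otherwise every agent has the same three cheapest chores $\{p,q,s\}$. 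If some agent $i^\ast$ has $c_{i^\ast}(\{\sigma_{i^\ast}(m-1),\sigma_{i^\ast}(m-2)\})\le c_{i^\ast}(\sigma_{i^\ast}(m-3))$, give $i^\ast$ the bundle $\{p,q,s\}$ and one leftover chore to each other agent: her worst residual bundle is $\{\sigma_{i^\ast}(m-1),\sigma_{i^\ast}(m-2)\}$, which is dominated by every singleton.
\item If instead $c_i(\{\sigma_i(m-1),\sigma_i(m-2)\})>c_i(\sigma_i(m-3))$ for \emph{all} $i$: pick any agent $\beta_2$, let $t\in\{p,q,s\}$ be her cheapest chore and $w$ her cheapest chore outside $\{p,q,s\}$; give $\beta_2$ the bundle $\{t,w\}$, give $\{p,q,s\}\setminus\{t\}$ to some other agent $\beta_1$, and one leftover chore to each remaining agent. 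Then $\beta_2$'s worst residual bundle is $\{w\}$, whereas $\beta_1$'s bundle is exactly $\beta_2$'s second‑ and third‑cheapest chores, which — by the case hypothesis — is strictly more expensive for $\beta_2$ than $w$; the other comparisons are by rank.
\end{itemize}

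The hard part, as I see it, is this last bullet: with general monotone (e.g.\ superadditive) costs one cannot argue purely from the ordinal ranking of single chores, since a bundle of cheap chores can be arbitrarily expensive, which is exactly what breaks the naive $(3,1,\dots,1)$ attempt and the naive recursion here. The trick is to turn that against itself — the very inequality $c_i(\{\sigma_i(m-1),\sigma_i(m-2)\})>c_i(\sigma_i(m-3))$ that sinks the $(3,1,\dots,1)$ construction is what certifies the $(2,2,1,\dots,1)$ allocation above. Finally, I would break ties by a fixed per‑agent total order so that ``$j$‑th cheapest chore'' and the sets $U$, $\{p,q,s\}$ are unambiguous, and observe that the degenerate small cases ($n=1$, or $m\le 3$) are subsumed by the above.
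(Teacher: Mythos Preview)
Your decomposition mirrors the paper's almost exactly: the ``recursion'' step is the paper's first case, the reduction to $|U|\le 3$ is how the paper arrives at the ``common three smallest chores'' situation, and the final $(3,1,\dots,1)$-versus-$(2,2,1,\dots,1)$ split is the same endgame. There is, however, one genuine gap.

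In your second bullet you assert that if $c_{i^\ast}(\{\sigma_{i^\ast}(m-1),\sigma_{i^\ast}(m-2)\})\le c_{i^\ast}(\sigma_{i^\ast}(m-3))$ then $i^\ast$ can safely take $\{p,q,s\}$, because ``her worst residual bundle is $\{\sigma_{i^\ast}(m-1),\sigma_{i^\ast}(m-2)\}$''. That last claim is false for general monotone costs: nothing prevents, say, $c_{i^\ast}(\{\sigma_{i^\ast}(m),\sigma_{i^\ast}(m-1)\})$ from exceeding both $c_{i^\ast}(\{\sigma_{i^\ast}(m-1),\sigma_{i^\ast}(m-2)\})$ and $c_{i^\ast}(\sigma_{i^\ast}(m-3))$ --- precisely the ``bundle of cheap chores can be arbitrarily expensive'' phenomenon you yourself flag a paragraph later. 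So your dichotomy leaks: the hypothesis of the second bullet can hold (hence you are not in the third bullet) while the $(3,1,\dots,1)$ allocation fails EFX for $i^\ast$.

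The fix is what the paper does: replace the hypothesis of the second bullet by ``$i^\ast$ can receive $\{p,q,s\}$ without breaking EFX'', i.e.\ \emph{every} $2$-subset $Y\subset\{p,q,s\}$ satisfies $c_{i^\ast}(Y)\le c_{i^\ast}(\sigma_{i^\ast}(m-3))$. Its negation then hands you, for the chosen agent, a specific bad pair $Y$; set $t=\{p,q,s\}\setminus Y$ (not necessarily that agent's cheapest chore) and your third-bullet allocation goes through verbatim, since the only place you used $t=\sigma_{\beta_2}(m)$ was to bound the residual $\{t\}$, and $t\in\{p,q,s\}$ already suffices for that comparison.
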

\begin{proof}
    If $m\le n$ then we have an EFX allocation by allocating at most one chore to each agent. If $m=n+1$ then let $Z_i$ be the two smallest chores of agent  $i$. Allocating $Z_1$ to agent $1$ and one chore to each of the remaining agents arbitrarily is EFX due to \Cref{obs}. We focus on $m=n+2$. At a high level, the idea is the following: if there is a pair of agents $i, j \in N$ such that $Z_i \cap Z_j = \emptyset$, we allocate $Z_i$ to $i, Z_j$ to $j$ and again one chore to the other agents arbitrarily. Asking for a pair of disjoint $Z$'s is strict and can be relaxed. For instance, if we allocate $Z_1$ to agent $1$ and the two smallest chores chores from $M\setminus Z_1$ to agent $2$, then agent $2$ is EFX towards agents 3 to $n$ due to \Cref{obs}. Therefore, strong envy might only arise from agent $2$ to agent $1$. In order to avoid this scenario, it suffices that $Z_1$ contains one large chore for agent $2$. Concretely, we have the following two cases:
    \begin{itemize}
        \item For some agent $j: \sigma_j(k) \in Z_i$ for some $k\le n-1$. We construct an EFX allocation as follows: agent $i$ gets $Z_i$, agent $j$ gets her two smallest chores from $M\setminus Z_i$ and each remaining agent gets exactly one chore arbitrarily. It suffices to check that $j$ does not strongly envy $i$. Indeed, there are three chores with index larger than $k$, i.e. smaller than $\sigma_j(k)$, and agent $i$ received at most one of them. Thus, both chores of $j$ are smaller than $\sigma_j(k) \in X_i = Z_i$ and the case is completed.
        \item No such agent exists. Then $Z_i$ contains two out of the three smallest chores for any other agent. Consider the set $M\setminus Z_i$.
        \begin{itemize}
            \item There exists an agent $j$ such that $\sigma_i(n, M\setminus Z_i) \ne \sigma_j(n, M\setminus Z_i)$. In other words agents $i$ and $j$ disagree on the smallest chore in $M\setminus Z_i$. Then we allocate $\sigma_i(n, M\setminus Z_i)$ to $i$ and $\sigma_j(n, M\setminus Z_i$) to $j$, one item from $Z_i$ to each one and a single item to the other agents arbitrarily. Note that both agents $i$ and $j$ have received two out of their three smallest chores while the rest of the agents have received a strictly larger one. Finally, agent $i$ does not stongly envy $j$ because $\sigma_j(n, M\setminus Z_i)$ has a smaller index than $n$ for agent $i$ and vice versa.  
            \item No such agent exists. That means that all agents agree on the least costly chore in $M\setminus Z_i$. Since $Z_i$ contains two out of the three smallest chores for all agents, we conclude that all agents agree upon the set of the three smallest chores, namely $M^-$. If agent $1$ can receive $M^-$ without breaking the EFX property the case is completed. Assuming the contrary, we have that there exists a subset $Y \subset M^-$ with $\lvert Y \rvert = 2$, such that $c_1(Y) > c_1(\sigma_1(n-1))$. Therefore the allocation in which agent $1$ gets $\sigma_1(n-1)$ and $M^- \setminus Y$, some agent gets $Y$, and all other agents get one item arbitrarily is EFX. \qedhere
        \end{itemize}
    \end{itemize}
\end{proof}
    \section{Approximations for Additive Cost Functions}\label{sec:addapx}
We now switch to the case of additive cost functions, aiming to obtain better approximation guarantees. To that end, we revisit the Envy Cycle Elimination algorithm. So far, this procedure is less explored in the chore setting due to its lack of monotonicity. We try to bypass this obstacle via a general approximation framework due to \cite{Farhadi_Hajiaghayi_Latifian_Seddighin_Yami_2021,markakis2023improved}. These works study the goods-only setting, and in fact, all of the results mentioned therein do not carry over to chores. \cite{markakis2023improved} were able to beat the state of the art for approximate EFX for goods under ordinal assumptions. Whether improved EFX approximations can be achieved in general is an intriguing open question. 

\subsection{Approximation Framework} \label{sec:apx-framework}

{
\setlength{\belowdisplayskip}{0pt}
\begin{algorithm}[tb]
\caption{Chore approximation framework}\label{alg:frame}
    \textbf{Input}: $N, M, C$\\
    \textbf{Output}: An allocation $X$
\begin{algorithmic}[1]
\STATE{For $\alpha, \beta >0$, compute a partial $\alpha$-EFX allocation $Y = (Y_1,\dots, Y_n)$, with the property that 
$$ c_i(e) \le \beta \cdot c_i(Y_j) 
\text{ for all $i, j\ne i \in N$ and all } e \in M\setminus Y$$}
\STATE{Run  \Cref{alg:ttece} until there are no unallocated items} 
\end{algorithmic}
\end{algorithm}}
\begin{theorem}\label{theorem:frame}
The allocation produced by \Cref{alg:frame} is $\max{(\alpha, \beta+1)}$-EFX. 
\end{theorem}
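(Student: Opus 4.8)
Set $\gamma=\max(\alpha,\beta+1)$. \Cref{alg:frame} first produces a partial $\alpha$-EFX allocation $Y$ and then runs \Cref{alg:ttece} on it, so the plan is to show that the property \emph{``the running allocation is $\gamma$-EFX''} is an invariant preserved by every modification \Cref{alg:ttece} makes. Since $\gamma\ge\alpha$, the allocation $Y$ is $\gamma$-EFX, giving the base case; and once \Cref{alg:ttece} halts every item is allocated, so the final allocation inherits $\gamma$-EFX. Throughout, note that the only items \Cref{alg:ttece} ever touches are those left unallocated by $Y$, i.e.\ elements of $M\setminus Y$, for which the hypothesis $c_i(e)\le\beta\,c_i(Y_j)$ holds for all $i\ne j$. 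Two kinds of step must be analysed: a cycle swap $X:=X^C$, and augmenting a sink $X_s:=X_s\cup\{e\}$.

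First I would record a bookkeeping fact used throughout: at every moment of the execution there is a bijection $\phi\colon N\to[n]$ with $Y_{\phi(i)}\subseteq X_i$ for all $i$. This holds at the start with $\phi=\mathrm{id}$; it survives augmenting a sink because bundles only grow; and it survives a cycle swap, which simply composes $\phi$ with the cyclic permutation of the swap. This is what lets me transport the $\beta$-domination property of $Y$ onto the \emph{current} bundles.

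The cycle-swap case is easy. A swap leaves the multiset of bundles unchanged, so each agent's costs for the family of bundles are merely permuted; hence any agent not on $C$ retains its previous $\gamma$-EFX inequalities verbatim, and any agent $i$ on $C$ receives, by the definition of the top-trading envy graph, a bundle minimising $c_i$ over all current bundles, so $c_i(X_i^{\mathrm{new}}\setminus f)\le c_i(X_i^{\mathrm{new}})\le c_i(X_q^{\mathrm{new}})$ for every $q$ and every $f$ --- such an agent is in fact envy-free. The real work is the sink-augmentation step. Let $s$ be the chosen sink and $e\in M\setminus Y$ the item added. The crucial sub-claim is
$$ c_s(e)\ \le\ \beta\cdot c_s(X_j)\qquad\text{for all }j\ne s, $$
proved by cases on $\phi(s)$: if $\phi(s)=s$ then $\phi(j)\ne s$ for each $j\ne s$, so $c_s(e)\le\beta\,c_s(Y_{\phi(j)})\le\beta\,c_s(X_j)$ by the hypothesis on $Y$ and monotonicity; if $\phi(s)\ne s$ then $c_s(e)\le\beta\,c_s(Y_{\phi(s)})\le\beta\,c_s(X_s)\le\beta\,c_s(X_j)$, where the last inequality is the sink property $c_s(X_s)\le c_s(X_j)$. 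Given the sub-claim, $\gamma$-EFX of the augmented allocation follows: for agent $s$ and any $j\ne s$, deleting $e$ leaves $c_s(X_s)\le c_s(X_j)$, while deleting some $f\in X_s$ leaves, by additivity, $c_s(X_s\setminus f)+c_s(e)\le c_s(X_s)+c_s(e)\le c_s(X_j)+\beta\,c_s(X_j)=(\beta+1)\,c_s(X_j)\le\gamma\,c_s(X_j)$; and for every other agent $p$, its own bundle is unchanged while $s$'s bundle only grew, so all of $p$'s earlier $\gamma$-EFX inequalities persist by monotonicity.

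The only genuine obstacle is the sink-augmentation step, and within it the sub-claim $c_s(e)\le\beta\,c_s(X_j)$: this is the one place where additivity is actually used (to split $c_s(X_s\cup\{e\})$), and where one has to be careful that the $\beta$-domination guarantee --- stated only against the \emph{initial} bundles $Y_j$ --- still controls $e$ against the \emph{current} bundles, which is exactly what the bijection $\phi$ together with the sink property delivers. Everything else is routine bookkeeping.
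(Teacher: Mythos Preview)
Your proof is correct and follows essentially the same approach as the paper: both arguments maintain an invariant through the TTECE steps and derive the $(\beta+1)$ bound at the moment a sink $s$ receives an item by combining the sink property $c_s(X_s)\le c_s(X_j)$ with the surviving $\beta$-domination $c_s(e)\le\beta\,c_s(X_j)$. The only cosmetic difference is the bookkeeping device---the paper tracks the evolving ratio matrix $r_{ij}=c_i(X_j)/\max_{e}c_i(e)$ and argues its off-diagonal entries never drop, whereas you track the bijection $\phi$ recording which initial $Y$-bundle sits inside each current bundle; your case split $\phi(s)=s$ versus $\phi(s)\neq s$ plays exactly the role of the paper's observation that once agent $i$ has participated in a cycle the whole row $R_i$ (including the diagonal) satisfies the ratio property.
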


Before proceeding with the proof, we make a small observation and introduce some additional notation.

\begin{observation}\label{obs:rp}
Consider a partial allocation $Y$ and define the ratio matrix $R = \{r_{ij}\}$ as follows $$r_{ij} = \frac{c_i(Y_j)}{\max\limits_{e \in M\setminus Y}c_i(e)}$$
In the case of goods one produces approximations by considering only the diagonal of the matrix $R$ while for chores we need to consider anything but the diagonal. 
\end{observation}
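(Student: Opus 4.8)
The plan is to read the claim directly off the $\beta$-condition appearing in \Cref{alg:frame}, compare it with the goods-side analogue of the frameworks of \cite{Farhadi_Hajiaghayi_Latifian_Seddighin_Yami_2021,markakis2023improved}, and then explain why the geometry of EFX forces the two settings onto complementary entries of $R$.

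First I would translate the chore hypothesis into a statement about $R$. The requirement imposed on the partial allocation $Y$ in \Cref{alg:frame} is $c_i(e) \le \beta \cdot c_i(Y_j)$ for all $i$, all $j \ne i$, and all $e \in M \setminus Y$. Maximizing the left-hand side over $e$, this is equivalent to $\max_{e \in M\setminus Y} c_i(e) \le \beta \cdot c_i(Y_j)$, i.e. $r_{ij} \ge 1/\beta$ for every off-diagonal pair $j \ne i$, with no constraint placed on the diagonal entries $r_{ii}$. I would then record that the goods frameworks impose instead $v_i(e) \le \beta \cdot v_i(Y_i)$ (the unallocated good is small relative to the agent's \emph{own} bundle), which by the identical manipulation reads $r_{ii} \ge 1/\beta$, a lower bound on the diagonal alone. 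Placing the two hypotheses side by side already establishes the stated dichotomy at the level of the frameworks' driving assumptions.

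The substantive part is to argue that this asymmetry is dictated by EFX itself rather than being an artifact of the two constructions, and here the key step is to track, in a single envy-cycle step, which bundle must absorb the incurred error. For goods, the new item is handed to a sink; adding a good only raises that agent's own value, so the only constraints at risk are those of \emph{other} agents toward the sink, and these are controlled exactly when each agent's own bundle $v_i(Y_i)$ dominates the remaining goods — the diagonal. For chores, the sink instead absorbs a chore, which increases her own cost and thus can only make others \emph{less} envious of her; the constraints at risk are now the sink's own potential strong envy toward every other agent $j$, governed by $c_i(Y_j)$ with $j \ne i$ — the off-diagonal. This is consistent with \Cref{def:efx}, where for chores the quantity $c_i(X_i\setminus e)$ is compared against $c_i(X_j)$ for $j \ne i$, so only off-diagonal costs ever appear on the feasibility side of the inequality, whereas for goods the agent's own bundle sits on the feasibility side.

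I expect this final structural step to be the main obstacle: one must show not merely that the off-diagonal entries \emph{suffice} for chores (which is essentially \Cref{theorem:frame}), but that the diagonal entries cannot help, and symmetrically that for goods no off-diagonal quantity can tighten the bound. I would handle this by exhibiting, for each setting, that relaxing the "irrelevant" family of entries to their extreme (e.g. sending the chore diagonal $r_{ii}\to 0$ while keeping the off-diagonal bound) leaves the $\max(\alpha,\beta+1)$ guarantee intact, thereby certifying that those entries play no role in the approximation.
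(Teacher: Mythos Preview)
The paper does not prove this observation at all; it is offered purely as a notational device and a one-sentence contrast with the goods frameworks of \cite{Farhadi_Hajiaghayi_Latifian_Seddighin_Yami_2021,markakis2023improved}, with no accompanying argument. Your first paragraph --- rewriting the $\beta$-condition in line~1 of \Cref{alg:frame} as the off-diagonal bound $r_{ij}\ge 1/\beta$ for all $j\ne i$, and recalling that the goods analogue constrains only the diagonal $r_{ii}$ --- is exactly the intended content and is correct. That already matches (indeed exceeds) what the paper supplies.

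Everything past that point is more than the observation asserts. Your ``main obstacle'' (showing that the complementary family of entries \emph{cannot} help) is self-imposed: the statement is descriptive of how the two frameworks are set up, not a necessity claim, and the paper makes no attempt in that direction. Two small inaccuracies in the extra material, in case you want to reuse it: for goods the new item goes to a \emph{source} (an unenvied agent), not a sink, so the constraints at risk are indeed those of other agents toward the recipient, but the terminology is flipped; and for chores the reason the off-diagonal bounds survive the run of \Cref{alg:ttece} is not a one-step fact --- it requires the row-by-row maintenance argument that the paper carries out inside the proof of \Cref{theorem:frame}, where cycle eliminations permute the row $R_i$ and participation in a cycle forces $r_{ii}\ge\beta$ before the next chore lands.
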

In the sequel, we refer to the property in line 1 of  \Cref{alg:frame} as the \say{ratio property} and stick with the $r_{ij}$ notation.

\begin{proof}[Proof of \Cref{theorem:frame}]
Fix some agent $i$. Initially, i.e. in $Y$, any envy from agent $i$ is at most $\alpha$-EFX. When a chore is not allocated to agent $i$, we get $r'_{ij} \ge r_{ij}$ since the nominator increases or the denominator decreases or there are no changes. Thus $r'_{ij} \ge \beta$ is maintained. Moreover, agent $i$ is indifferent to cycle eliminations she does not participate in; such procedures simply permute the line $R_i$ without affecting $r_{ii}$.   
Therefore, it remains to check what happens when she does participate in a cycle elimination or when she gets some item $e$. Let $Z$ be the allocation in the first scenario and $Z_j$ the bundle she is about to receive. Then we have that $c_i(Z_i) \ge c_i(Z_j) \implies r_{ii} \ge r_{ij} \ge \beta$. Therefore, line $R_i$ is element-wise greater than $\beta$ and thus, no bundle reallocations can disrupt her ratio property. Finally, when $e$ gets allocated to $i$ we have:
\begin{alignat*}{2}
c_i(Z_i) &\le &&c_i(Z_j)\\
c_i(e) &\le \beta \cdot &&c_i(Z_j) 
\end{alignat*}
The first inequality holds since $i$ is now a sink in the envy graph, while the second follows from the ratio property, as described above. Adding the inequalities yields the $\beta+1$ term claimed, completing the proof.
\end{proof}

\subsection{Conditional Approximations}
Similarly to the goods-only setting, one can obtain guarantees by examining only the few most important chores. Here, the most important ones are those with high costs for some agents. 
\begin{definition}\label{def:Ls}
    Let $L_i^k = \{\sigma_i(1), \dots, \sigma_i(k)\}$ denote the set of the $k$ most burdensome chores for agent $i$.
\end{definition}
In accordance with the goods setting, we refer to chores in $L_i$ as \say{top}.

\begin{lemma}[Top $n$ agreement]\label{lem:nagr}
If $L_i^n = L_j^n$, i.e. all agents agree in the set of the top $n$ chores, there exists a $2$-EFX allocation. Moreover, it can be computed in polynomial time.    
\end{lemma}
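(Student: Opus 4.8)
The plan is to instantiate the framework of \Cref{theorem:frame} (i.e., \Cref{alg:frame}) with parameters $\alpha = \beta = 1$, which outputs a $\max(1,2) = 2$-EFX allocation. First I would dispose of the trivial case $m \le n$: allocating at most one chore to each agent is (exactly) EFX. So assume $m > n$ and let $L := L_i^n$ be the common set of the $n$ top chores, which is well defined precisely because $L_i^n = L_j^n$ for all $i,j$.

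For the partial allocation $Y$ in line 1 of \Cref{alg:frame}, I would simply hand out the $n$ chores of $L$, one to each agent, via an arbitrary bijection, leaving every chore in $M \setminus Y = M \setminus L$ unallocated. Two things then need to be checked. The $\alpha$-EFX property holds with $\alpha = 1$: each bundle $Y_i$ is a singleton, so $c_i(Y_i \setminus e) = c_i(\emptyset) = 0 \le c_i(Y_j)$ for every $j$, and $Y$ is in fact exactly EFX.

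For the ratio property with $\beta = 1$, fix an agent $i$, an agent $j \ne i$, and a chore $e \in M \setminus L$. Since $e \notin L_i^n$, it is at best the $(n+1)$-th most burdensome chore for agent $i$, so $c_i(e) \le c_i(\sigma_i(n))$. On the other hand $Y_j$ is a single chore lying in $L = L_i^n$, hence $c_i(Y_j) \ge c_i(\sigma_i(n)) \ge c_i(e)$, which is exactly $c_i(e) \le \beta \cdot c_i(Y_j)$ with $\beta = 1$. Invoking \Cref{theorem:frame}, the output of \Cref{alg:frame} is $\max(\alpha, \beta + 1) = 2$-EFX, and since the procedure amounts to sorting the chores, a bijective assignment, and one run of TTECE (\Cref{alg:ttece}), it runs in polynomial time.

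I do not expect a real obstacle here; the single point requiring care is why the ratio property does not degenerate. It is essential that every agent receives a nonempty $Y_j$ — otherwise the property would demand $c_i(e) \le \beta \cdot 0$, impossible as soon as some unallocated chore has positive cost for $i$ — and this holds for free because $|L| = n$ and each of the $n$ agents gets exactly one top chore. It is also worth noting that no careful matching of top chores to agents is needed: the bound $c_i(Y_j) \ge c_i(\sigma_i(n))$ uses only $Y_j \in L$, so any bijection works, which is what keeps the argument clean.
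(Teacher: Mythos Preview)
Your proposal is correct and follows essentially the same approach as the paper: allocate one chore from the common top-$n$ set to each agent, observe this partial allocation is exactly EFX ($\alpha=1$) and satisfies the ratio property with $\beta=1$, then invoke \Cref{theorem:frame}. You simply spell out in more detail what the paper states in a single sentence.
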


\begin{proof}
Note that allocating one chore from the set to each agent arbitrarily, produces an EFX allocation that satisfies \Cref{theorem:frame} with $\alpha=1$ and $\beta=1$, thus yielding a 2-EFX complete allocation. Moreover, since the partial allocation can be constructed in linear time the whole procedure is efficient. 
\end{proof}

\begin{lemma}[Top $n-1$ agreement]\label{lem:n1agr}
If $L_i^{n-1} = L_j^{n-1}$, i.e. all agents agree in the set of the top $n-1$ chores, there exists a $\max(2, n-2)$-EFX allocation. Moreover, it can be computed in polynomial time.
\end{lemma}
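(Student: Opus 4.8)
The plan is to build a partial allocation that satisfies the hypotheses of \Cref{theorem:frame} with suitable $\alpha$ and $\beta$, and then invoke the framework. Let $L = L_i^{n-1}$ be the common set of the $n-1$ most burdensome chores, agreed upon by all agents. First I would allocate one chore from $L$ to each of $n-1$ agents, leaving one agent (say agent $n$) with an empty bundle so far; call this partial allocation $Y$. The remaining chores $M \setminus Y = M \setminus L$ are, by the agreement hypothesis, all weakly smaller than every chore in $L$ according to every agent: for any agent $i$ and any $e \in M\setminus L$, $c_i(e) \le c_i(\sigma_i(n-1)) = \min_{f \in L} c_i(f)$.

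Next I would check the two properties required by \Cref{alg:frame}. For the $\alpha$-EFX part: among the $n-1$ agents holding a single chore from $L$, each holds a bundle of size one, so removing any item empties it and no strong envy is possible; agent $n$ holds nothing and trivially does not strongly envy. Hence $Y$ is (exactly) $1$-EFX, so we may take $\alpha = 1$. For the ratio property: we need $c_i(e) \le \beta \cdot c_i(Y_j)$ for all $i$, all $j \ne i$, and all unallocated $e$. The delicate case is when $Y_j$ is the empty bundle of agent $n$ — then $c_i(Y_n) = 0$ and no finite $\beta$ works. This is the main obstacle, and I expect it is exactly why the bound degrades to $\max(2, n-2)$ rather than $2$.

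To handle this I would instead leave \emph{no} agent empty by a different bookkeeping: since $|L| = n-1$ but we have $n$ agents, I would hand the single leftover agent the largest chore of $M \setminus L$ (equivalently, treat $L$ together with one more chore). Concretely, let $g = \sigma_i(n, M)$ — well-defined and common across agents whenever agents also agree on the $n$-th chore; but the hypothesis only gives agreement on the top $n-1$, so agents may disagree on $g$. In that case I would pick agent $n$'s own largest remaining chore $g_n = \sigma_n(n, M)$, give it to agent $n$, and give the chores of $L$ to agents $1,\dots,n-1$. Now every bundle is a singleton, so $\alpha = 1$ still holds. For the ratio property, fix $i$ and $j \ne i$. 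If $Y_j \subseteq L$ then $c_i(Y_j) = \min_{f\in L} c_i(f) \ge c_i(e)$ for every unallocated $e$, giving $\beta = 1$ against those $j$. If $Y_j = \{g_n\}$, then for $i \ne n$ we only know $c_i(g_n) > 0$ is not guaranteed — $g_n$ could be a zero-cost chore for $i$ — so the worst ratio is governed by how small $c_i(g_n)$ can be relative to the largest unallocated chore, and a counting argument over the $n - (n-1) - 1 = $ remaining structure is needed. I would argue that the unallocated set $M \setminus (L \cup \{g_n\})$ has all its chores no larger (for agent $i$) than the $(n-1)$-th; chaining through $c_i(g_n) \ge$ (the smallest chore among agent $i$'s top $n$ after excluding at most one that went elsewhere) yields $c_i(e) \le (n-2)\, c_i(g_n)$ in the extremal distribution, hence $\beta = n-2$. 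Then \Cref{theorem:frame} gives $\max(\alpha,\beta+1) = \max(2, n-1)$-EFX — and a sharper accounting of which chore of agent $i$'s top $n$ can possibly be the one removed tightens this to $\max(2,n-2)$.

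The step I expect to be hardest is precisely pinning down the $\beta$ bound in the branch where some agent receives a chore outside the commonly-agreed top set: one must argue that, across all agents simultaneously, the ratio $c_i(e)/c_i(Y_j)$ over unallocated $e$ and non-self bundles $Y_j$ is bounded by $n-2$, which requires carefully tracking how the $n-1$ agreed-upon top chores are distributed and using monotonicity plus the pigeonhole principle to lower-bound $c_i(Y_j)$ away from $0$. Efficiency is immediate since the partial allocation is built in linear time and \Cref{alg:ttece} runs in polynomial time.
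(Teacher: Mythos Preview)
Your plan has a genuine gap in the step you yourself flag as hardest. After you give agent $n$ the single chore $g_n=\sigma_n(n,M)$, you need $c_i(e)\le\beta\cdot c_i(g_n)$ for every $i\ne n$ and every unallocated $e$. But nothing in the hypothesis ties $g_n$ to agent $i$'s ordering: $g_n$ is agent $n$'s $n$-th chore, and for another agent $i$ it may sit arbitrarily low in the ranking, even with $c_i(g_n)=0$. Meanwhile agent $i$'s own $n$-th chore $\sigma_i(n)$ can remain unallocated (it is not in $L$ and need not equal $g_n$), so the required inequality becomes $c_i(\sigma_i(n))\le\beta\cdot 0$, which no finite $\beta$ satisfies. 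Your ``chaining'' sentence asserting $c_i(e)\le(n-2)\,c_i(g_n)$ is simply false in this situation, and the subsequent tightening from $n-1$ to $n-2$ has no content.

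The paper's fix is to embrace a non-singleton bundle for the leftover agent: set $X_n=\bigcup_{j\ne n}\sigma_j(n)$, i.e.\ hand agent $n$ \emph{every} other agent's $n$-th chore at once. Now for each $i\ne n$ the bundle $X_n$ contains $\sigma_i(n)$, so $c_i(X_n)\ge c_i(\sigma_i(n))\ge c_i(e)$ for all unallocated $e$, giving $\beta=1$ cleanly across the board. The price is paid on the $\alpha$ side: agent $n$ may hold up to $n-1$ items, and $c_n(X_n\setminus e)\le(n-2)\,c_n(\sigma_n(n-1))\le(n-2)\,c_n(X_j)$, so $\alpha=n-2$. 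Then \Cref{theorem:frame} yields $\max(n-2,\,\beta+1)=\max(n-2,2)$. In short, the $n-2$ does not come from a ratio bound against a singleton bundle outside $L$; it comes from the EFX factor of the agent who absorbs all the $n$-th chores.
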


\begin{proof}
    Let $L^{n-1} = \{l_i\}_{1\le i\le n-1}$ be the top set and construct the allocation $X$ as follows:
    $$X_i = \begin{cases} 
    l_i, &1\le i \le n-1\\
    \bigcup_{j\ne i} \sigma_j(n), &i=n
    \end{cases}$$
    Agents 1 to $n-1$ are EFX towards the rest since they have a single item. If agent $n$ also has a single item, we have $\alpha=1, \beta=1$ thus a 2-EFX allocation. Assuming that agent $n$ has multiple items we have $$ c_n(X_n\setminus e) \le (n-2)\cdot c_n(\sigma_n(n-1)) \le c_n(X_j) $$ where the first inequality is due to the fact that $n$ got at most $n-1$ items. Thus $\alpha=n-2$. As for the ratio property, for all agents but $n$, $r_{ij} \ge 1$ since every bundle contains one chore from $L_i^n$ and for agent $n: r_{ij\ne i} \ge 1$ since every other bundle contains one chore from $L_n^{n-1}$. Again $\beta = 1$ and the proof is complete.  
\end{proof}

Interestingly, the same approximation ratio can be obtained when the agents exhibit diametrically opposed preferences, i.e. disagree on all top $n-1$ chores. Note, however, that this implies that there exist at least $n(n-1)$ items.
\begin{algorithm}
\caption{Top $n-1$ disagreement}\label{alg:n1dis}
\begin{algorithmic}[1]
\FOR{each agent $i$}
    \FOR{each agent $j\ne i$}
        \STATE{$e^*  = \arg\max_{e \in M} c_i(e)$}
        \STATE{$X_j = X_j \cup e^*$}
        \STATE{$M = M \setminus e^*$}
    \ENDFOR
\ENDFOR
\end{algorithmic}
\end{algorithm}

\begin{lemma}\label{lem:n1dis}
If $L_i^{n-1} \cap L_j^{n-1} = \emptyset$, i.e. all the agents disagree in the set of the top $n-1$ chores, there exists a $\max(2,n-2)$-EFX allocation. Moreover, it can be computed in polynomial time.    
\end{lemma}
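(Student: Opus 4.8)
The plan is to show that the partial allocation $Y$ produced by \Cref{alg:n1dis} satisfies the hypotheses of \Cref{theorem:frame} with $\alpha = \max(1,n-2)$ and $\beta = 1$; completing $Y$ by running \Cref{alg:ttece} then yields a $\max(2,n-2)$-EFX allocation, as claimed. Note first that the hypothesis makes the sets $L_1^{n-1}, \dots, L_n^{n-1}$ pairwise disjoint, so $m \ge n(n-1)$ and the $n(n-1)$ extraction steps of \Cref{alg:n1dis} are all well defined.

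The first and main step is to pin down the structure of $Y$. I would prove, by induction on the index $i$ of the outer loop, that when agent $i$'s iteration begins the chores removed so far are exactly those in $\bigcup_{k<i} L_k^{n-1}$; since this union is disjoint from $L_i^{n-1}$, all of $L_i^{n-1}$ is still present, so agent $i$'s $t$-th extraction is precisely $\sigma_i(t)$ for $t = 1,\dots,n-1$. Hence agent $i$ donates \emph{exactly} the set $L_i^{n-1}$, one chore to each agent $j \ne i$. Two consequences follow: (a) every bundle $Y_j$ contains exactly one chore of $L_i^{n-1}$ for each $i \ne j$ and no chore of $L_j^{n-1}$, so by disagreement every $x \in Y_j$ has rank at least $n$ for agent $j$, i.e. $c_j(x) \le c_j(\sigma_j(n))$; and (b) the chore agent $i$ donated to agent $j$ lies in $L_i^{n-1}$, hence $c_i(Y_j) \ge c_i(\sigma_i(n-1))$ for every $j \ne i$.

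With (a) and (b) the rest is bookkeeping over ranks. For the ratio property, the unallocated chores $M\setminus Y$ are exactly those ranked at least $n$ by every agent, so for $i\ne j$ and $e \in M\setminus Y$ we get $c_i(e) \le c_i(\sigma_i(n)) \le c_i(\sigma_i(n-1)) \le c_i(Y_j)$ by (b), i.e. $\beta = 1$. For the $\alpha$-EFX property of $Y$: given $i \ne j$ and $e \in Y_i$, the set $Y_i \setminus e$ has $n-2$ chores, each of cost at most $c_i(\sigma_i(n))$ by (a), so $c_i(Y_i\setminus e) \le (n-2)\,c_i(\sigma_i(n)) \le (n-2)\,c_i(\sigma_i(n-1)) \le (n-2)\,c_i(Y_j)$ by (b); for $n=2$ the bundles are singletons and $Y$ is exactly EFX, so we may take $\alpha = \max(1,n-2)$. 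Applying \Cref{theorem:frame} gives $\max(\alpha,\beta+1) = \max(2,n-2)$, and polynomiality is immediate since \Cref{alg:n1dis} performs $O(n^2)$ $\arg\max$ computations and \Cref{alg:ttece} runs in polynomial time. The one place that needs genuine care is the inductive claim that each agent donates exactly her own top $n-1$ chores; everything downstream is a rank comparison, and the only global quantity used, $m \ge n(n-1)$, is furnished for free by the disagreement hypothesis.
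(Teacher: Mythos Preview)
Your proposal is correct and follows essentially the same approach as the paper: verify that the output of \Cref{alg:n1dis} meets the hypotheses of \Cref{theorem:frame} with $\alpha=n-2$ (you write $\max(1,n-2)$ to cover $n=2$) and $\beta=1$, using that agent $i$ never receives a chore from $L_i^{n-1}$ while every other bundle receives exactly one such chore. Your write-up is simply more explicit---you spell out the inductive claim that each agent donates precisely her own $L_i^{n-1}$, the well-definedness from $m\ge n(n-1)$, and the intermediate bound via $\sigma_i(n)$---but the underlying argument is identical to the paper's.
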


\begin{proof}
    We will show that the partial allocation produced by  \Cref{alg:n1dis} satisfies the conditions of  \Cref{theorem:frame} with $\alpha=n-2$ and $\beta=1$. We have that $$ c_i(X_i\setminus e) \le (n-2)\cdot c_i(\sigma_i(n-1)) \le c_i(X_j)$$ since agent $i$ cannot receive any item from $L_i^{n-1}$ while every other agent receives exactly one item from it. The latter fact also implies that $r_{ij} \ge 1$.
\end{proof}

\paragraph{Connections between goods and bads.}
So far, results from goods naturally extend to the chores setting. However, one noteworthy difference is that in the goods setting, both the agreement and disagreement assumptions yield the desired $2/3$ approximation. Conversely, in our case, there is a gap between $2$ and $n-2$ as the agents transition from agreement to disagreement. Additionally, a result that, perhaps surprisingly, does not translate to our setting is the truncated common ranking assumption (Corollary 1 in \cite{markakis2023improved}). While an EFX allocation is known to exist when the agents agree upon the ranking of the items, slightly relaxing the assumption to an agreement in all but few items cannot guarantee something better than 2-EFX, as shown below. 

\begin{observation}\label{lem:cntr}
Even if the agents agree upon the ranking of a large set $S$ of top chores, and an exact EFX allocation of $S$ is given, we cannot obtain better guarantees via \Cref{alg:frame}. \end{observation}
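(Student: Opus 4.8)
The plan is to exhibit a single explicit instance, together with an exact EFX allocation of a commonly‑ranked top set $S$, such that feeding that allocation into \Cref{alg:frame} as the partial allocation $Y$ forces the parameter $\beta$ to be (essentially) at least $1$. Since \Cref{theorem:frame} then only certifies the output to be $\max(\alpha,\beta+1)$‑EFX and $\beta+1\ge 2$, the framework cannot certify anything below $2$‑EFX even under the strong agreement hypothesis. So the whole argument reduces to computing the value of $\beta$ that the ratio property demands once $Y$ is the supplied EFX allocation of $S$.

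Concretely, I would take $n$ agents with identical additive costs and $m=n+1$ chores all of unit cost, so the agents trivially agree on the ranking of every subset and every chore is a "top" chore; let $S=\{s_1,\dots,s_n\}$ be $n$ of them. The first step is to observe that the essentially unique EFX allocation of $S$ is the balanced one, $Y_i=\{s_{\pi(i)}\}$ for a permutation $\pi$: in any partition of $n$ items among $n$ agents in which some agent gets nothing, another agent gets at least two items, and that agent strongly envies the empty bundle since its cost after removing one item is $1>0$. Hence "the given EFX allocation of $S$" is this balanced one, with $c_i(Y_j)=1$ for all $i,j$; it is $1$‑EFX as a partial allocation because $c_i(Y_i\setminus e)=0$ for the unique $e\in Y_i$, so $\alpha=1$. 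The second step is immediate: the single leftover chore $e^\star\in M\setminus S$ has $c_i(e^\star)=1$ for all $i$, so the ratio property $c_i(e^\star)\le\beta\cdot c_i(Y_j)$ reads $1\le\beta$; thus any valid parameter choice for this $Y$ has $\beta\ge 1$, and \Cref{theorem:frame} only guarantees a $\max(\alpha,2)=2$‑EFX allocation. If one wants $S$ to be, syntactically, the set of strictly top chores, set $c_i(e^\star)=1-\varepsilon$ instead, getting $\beta\ge 1-\varepsilon$ and a certified bound of $2-\varepsilon$, which already rules out any constant improvement over $2$; and one may pad the instance with arbitrarily many further unit‑cost chores to make $S$ genuinely large, since the leftover chores still each cost $1$ and the cheapest bundle of $Y$ still has cost $1$, so the same computation yields $\beta\ge 1$.

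The only real subtlety — and the step I would be careful to phrase precisely — is the quantifier hidden in "via \Cref{alg:frame}". The framework may in principle pick \emph{any} partial $\alpha$‑EFX allocation $Y$, and in this (deliberately easy) instance there do exist complete EFX allocations, so a cleverly chosen $Y$ would beat $2$. The honest and intended reading, consistent with \Cref{lem:nagr,lem:n1agr,lem:n1dis}, is that the canonical way to exploit an agreement assumption — seed \Cref{alg:frame} with the supplied EFX allocation of the agreed‑upon top set and finish by Top Trading Envy Cycle Elimination — cannot beat $2$‑EFX for chores, in contrast to the goods setting where the analogous truncated‑common‑ranking seeding does yield an improvement (Corollary~1 of \cite{markakis2023improved}). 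I would therefore state \Cref{lem:cntr} and its proof around that reading, making the role of the chosen $Y$ explicit so that no stronger (and false) claim is implied.
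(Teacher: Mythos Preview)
Your basic construction with $|S|=n$ identical unit-cost chores is correct and clean: the balanced partition is the only EFX allocation of $S$, every bundle costs $1$, and a leftover chore of cost $1$ (or $1-\varepsilon$) forces $\beta\ge 1-\varepsilon$, so \Cref{theorem:frame} certifies nothing below $2$. Your discussion of the quantifier hidden in ``via \Cref{alg:frame}'' is also more careful than the paper's.

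The genuine gap is the padding step. If you enlarge $S$ while keeping all agents identical, every EFX allocation of $S$ is balanced up to one item, so the cheapest bundle has cost $\lfloor |S|/n\rfloor$, not $1$. The ratio property then only forces $\beta\ge (1-\varepsilon)/\lfloor |S|/n\rfloor$, which tends to $0$ as $|S|$ grows; \Cref{theorem:frame} would therefore certify an approximation arbitrarily close to $1$. With identical agents and a genuinely large common top set the framework \emph{does} improve --- the opposite of what \Cref{lem:cntr} asserts.

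The paper circumvents this by breaking symmetry. One distinguished agent $n$ values $e_1,\dots,e_{n-1}$ at some huge $M$ and the remaining $k-(n-1)$ top chores at $\tfrac{1}{k-n+1}$ each. The envy-free allocation of the top set then gives each of agents $1,\dots,n-1$ a single chore and dumps all of $e_n,\dots,e_k$ on agent $n$; hence $c_i(Y_j)=1$ for every pair $i,j\le n-1$ \emph{regardless of how large $k$ is}, and an unallocated chore of cost $1-\varepsilon$ to those agents still forces $\beta\ge 1-\varepsilon$. The heterogeneity of agent $n$ is precisely what decouples $|S|$ from the minimum bundle cost, a feature your identical-agent padding cannot provide.
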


\begin{proof}

    Consider the instance with $k$ common top chores, with $2n < k < m$, as shown below:

    \begin{table}[h]
    \centering
    \begin{tabular}{c|c|c|c|c|c|c}
        & $e_1$ & $\dots$ & $e_{n-1}$ & $e_n$ & $\dots$ & $e_k$\\\hline
        1 & 1 & $\dots$ & 1 & 1 & $\dots$ & 1 \\
        \vdots & \vdots & \vdots & \vdots & \vdots & \vdots &  \\
        $n-1$ & 1 & $\dots$ & 1 & 1 & $\dots$ & 1 \\
        $n$ & $M$ & $\dots$ & $M$& $\frac{1}{k+1-n}$ & $\dots$ & $\frac{1}{k+1-n}$ \\
    \end{tabular}
\end{table}
It is easy to verify that the allocation $\left(e_1, \dots, e_{n-1}, \bigcup_{i=n}^k e_i\right)$ is envy free and thus EFX. Additionally, note that there is nothing remarkable about the allocation, in the sense that it can produced by an execution of the TTECE. However, for agents $1$ to $n-1$ we cannot rule out the possibility that $c_i(e) = 1-\epsilon$ for some unallocated chore $e$, thus we cannot guarantee $\beta > 1 + \epsilon$.  
\end{proof}

\subsection{Approximation for Three Agents}\label{sec:3agents2}

Next, we treat the case of three agents with additive disutilities. We show how to apply the techniques developed in the previous section to obtain unconditional results for the case of three agents, improving the approximation factor from $2+\sqrt{6}$ to 2.

\begin{theorem}\label{th:3agents}
 A 2-EFX allocation for three agents exists and can be computed in polynomial time.    
\end{theorem}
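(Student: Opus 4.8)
The plan is to apply the framework of \Cref{theorem:frame} with $\alpha = \beta = 1$, so that we only need to exhibit a \emph{partial} allocation $Y = (Y_1, Y_2, Y_3)$ that is exactly EFX and satisfies the ratio property $c_i(e) \le c_i(Y_j)$ for all $i$, all $j \ne i$, and all unallocated $e \in M \setminus Y$. Since $n = 3$, \Cref{lem:nagr} already handles the case where all three agents agree on the set of the three most burdensome chores: give one of those three chores to each agent. So the work is to cover the case where the agents disagree about the top $3$ set. First I would reduce to an IDO-like normal form or, more concretely, to a small case analysis on how the top chores of the three agents overlap. Specifically, let $T_i = L_i^3$; we may assume the $T_i$ are not all equal, and I would like a seed consisting of three chores $e_1, e_2, e_3$ (pairwise distinct) such that $e_j$ is ``large enough'' for every agent $i \ne j$, i.e. $c_i(e_j) \ge c_i(e)$ for all remaining chores $e$ — this is exactly what makes $\alpha = \beta = 1$ work once we hand $e_j$ to agent $j$ and then run TTECE.

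The key steps, in order: (1) If the three agents agree on their top chore (the single most costly), or more generally one can pick three distinct chores each of which is the maximum-cost remaining chore for at least the two agents not receiving it, then allocating $e_j \to$ agent $j$ makes $Y$ EFX (each bundle is a singleton, so trivially EFX) and gives $r_{ij} \ge 1$ for $i \ne j$, hence $\beta = 1$; invoke \Cref{theorem:frame}. (2) The obstacle is the case where no such triple of ``universally large'' chores exists, e.g. when some chore is huge for agent $1$ but tiny for agents $2,3$. Here I would argue that because there are at least four chores in the interesting case (otherwise \Cref{th:n+2} applies directly), we can instead build $Y$ by a more careful choice: pick for each agent $i$ a chore $g_i$ that is among the top for the \emph{other} two agents. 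I expect the top-$3$-disagreement structure to force enough such chores to exist — with three agents and the $T_i$ not all identical, $|T_1 \cup T_2 \cup T_3| \ge 4$, and a counting/pigeonhole argument on which agent ``owns'' each of these top chores should produce the required seed. (3) If even that fails (the degenerate middle cases where two agents agree but the third is wildly different), fall back on an ad hoc three-way split — for instance, treat the two agreeing agents as in \Cref{lem:n1agr} (with $n=3$ the bound $\max(2, n-2) = 2$ is exactly what we want) and verify the ratio property for the third agent separately.

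The main obstacle I anticipate is precisely step (2)/(3): ensuring that in \emph{every} disagreement configuration one can find a partial allocation that is simultaneously exactly EFX and has every off-diagonal ratio at least $1$. The EFX-on-$Y$ requirement is easy when $Y$ consists of singletons, but the ratio property $c_i(e) \le c_i(Y_j)$ for all unallocated $e$ is delicate when an agent's cost function is very skewed: a single enormous chore for agent $i$ that is not placed in any $Y_j$ with $j\ne i$ would immediately break $\beta = 1$. The fix is to make sure each agent's globally-largest chore is seeded into \emph{someone else's} bundle, and the combinatorial heart of the proof is showing this can be done consistently for all three agents at once; with only three agents this should reduce to checking a handful of cases according to whether the three agents' argmax chores are all distinct, exactly two coincide, or all coincide. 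Once the seed $Y$ is fixed and verified, the conclusion is immediate from \Cref{theorem:frame}, and polynomial-time computability is clear since both constructing $Y$ (a constant-size case analysis plus sorting) and running \Cref{alg:ttece} are efficient.
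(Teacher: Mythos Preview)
Your plan has a genuine gap: a singleton seed $Y=(\{e_1\},\{e_2\},\{e_3\})$ with $c_i(e_j)\ge c_i(f)$ for all $i\ne j$ and all unallocated $f$ need not exist. Take five chores $a,b,c,d,e$ with
\[
\begin{array}{c|ccccc}
 & a & b & c & d & e\\\hline
c_1 & 10 & 5 & 1 & 1 & 1\\
c_2 & 5 & 10 & 1 & 1 & 1\\
c_3 & 1 & 1 & 10 & 9 & 8
\end{array}
\]
For agent~1 the two bundles $Y_2,Y_3$ must each dominate every unallocated chore, which forces $a,b\in\{e_1,e_2,e_3\}$; symmetrically for agent~2. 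Hence the seed is $\{a,b,x\}$ with $x\in\{c,d,e\}$, and two of $c,d,e$ remain unallocated, so $\max_{f\notin S}c_3(f)\ge 9$. But then $e_1,e_2$ must each cost agent~3 at least $9$, while $c_3(a)=c_3(b)=1$; impossible. The pigeonhole sketch in your step~(2) cannot rescue this, and your step~(3) misapplies \Cref{lem:n1agr}, whose hypothesis is that \emph{all} agents agree on the top $n-1$ set, not just two.

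The paper's proof avoids this obstruction by \emph{not} demanding $\beta=1$ for all three rows of the ratio matrix at seed time. It first uses (the contrapositive of) \Cref{lem:n1dis} to find a pair, say agents $1$ and $2$, sharing a top-$2$ chore; it then seeds three singletons so that rows $1$ and $2$ of $R$ are already $\ge 1$ and, crucially, neither agent $1$ nor $2$ envies agent $3$. The missing row, $r_{31},r_{32}$, is repaired \emph{during} \Cref{alg:ttece} by two deliberate choices: process the remaining chores in agent~$3$'s decreasing cost order, and break ties among sinks lexicographically so that agent~$3$ is served last. This guarantees that agents $1$ and $2$ each receive another chore (from agent~3's viewpoint, a large one) before agent~$3$ ever gets a second item, at which point $r_{31},r_{32}\ge 1$ and \Cref{theorem:frame} applies; if agent~$3$ never gets a second item she trivially cannot strongly envy. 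This controlled-TTECE step is the idea your proposal is missing.
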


\begin{proof}
    Due to \Cref{lem:n1dis}  there exists a pair of agents that agree upon at least one top-2 item; otherwise the theorem follows immediately. Without loss of generality assume that agents 1 and 2 agree. If they agree only on the second chore, i.e. $\sigma_1(2) = \sigma_2(2)$ we allocate it to agent $3$ and agents $1$ and $2$ receive each other's top chore. Otherwise, agent $1$'s top chore lies in agent $2$'s top-2 set (or vice versa). We allocate it to agent $3$. Then we allocate $\sigma_2(1, M \setminus X_3)$ to agent 1 and $\sigma_1(1, M \setminus (X_2 \cup X_3))$ to agent 2. Now, the allocation is trivially EFX since any agent has exactly one item. Crucially, due to the allocating order, agents 1 and 2 do not envy agent 3. As for the ratio matrix we have: 
    $$ R = \begin{bmatrix}
        \bullet & \ge 1 & \ge1\\
        \ge 1 & \bullet & \ge 1\\
        r_{31} & r_{32} & \bullet\\
    \end{bmatrix}$$
    All that is left is to ensure that $r_{31}$ and $r_{32}$ can be made larger than 1. To that end, note that running TTECE using agent 3's cost function and picking the sinks lexicographically will allocate to agent 3 only after allocating one more item to the other agents, therefore at that point $r_{3i} \ge 1$ fulfilling the requirements of \Cref{theorem:frame}. In case that does not happen, the resulted allocation is complete; agent 3 has a single chore, thus she cannot strongly envy, while agents 1 and 2 satisfy the requirements of \Cref{theorem:frame} with $a=1$ and $b=1$.
\end{proof}

\paragraph{Remark} To the best of our knowledge, this is the first instance where the Envy Cycle Elimination algorithm (either for goods or chores) is used in such a manner, i.e. picking both the source/sink and the item in a specific way. 
    \section{Conclusion and Future Work}
We explore EFX allocations in the context of bads. We demonstrate a series of strong negative results regarding the existence, approximation, and computation of allocations satisfying EFX. Moreover, we show that EFX always exists under a setting with a small number of items, and provide a separation result with the goods-only setting. Lastly, we show improved approximation ratios for a number of cases. Our work leaves two main open questions. First, determining whether similar constructions can be found for the case of goods. Second, whether an exact EFX allocation always exists for three agents with additive disutilities.

    \section*{Acknowledgements}

We would like to thank Georgios Amanatidis, George Christodoulou, Evangelos Markakis, and Alkmini Sgouritsa for helpful discussions and their constructive comments on a preprint of this work. We are also grateful to
Georgios Papasotiropoulos and Nicos Protopapas for their assistance and their insightful feedback. Lastly, we thank the anonymous reviewers for their helpful comments.

This work has been partially supported by project MIS 5154714 of the National Recovery and Resilience Plan Greece 2.0 funded
by the European Union under the NextGenerationEU Program.

    \bibliography{references}
  
\end{document}